\documentclass[aps,prd,superscriptaddress,nofootinbib,preprintnumbers,
amsmath,amssymb]{revtex4-2}

\usepackage[english]{babel}
\usepackage{amsmath,amssymb,amsthm}
\usepackage{array}
\usepackage{booktabs}
\usepackage{float}
\usepackage{graphicx}
\usepackage{xcolor}
\usepackage{tikz}
\usetikzlibrary{arrows.meta,positioning,shapes.geometric}
\usepackage[colorlinks=true,citecolor=blue,urlcolor=blue,linkcolor=blue]{hyperref}

\graphicspath{{figures/}}
\newcommand{\beq}{\begin{equation}}
\newcommand{\eeq}{\end{equation}}
\newcommand{\bea}{\begin{eqnarray}}
\newcommand{\eea}{\end{eqnarray}}

\newcommand{\comm}[1]{}

\newtheorem{proposition}{Proposition}
\newtheorem{theorem}{Theorem}

\begin{document}

\title{Closure of Traceful Radiation-Gauge Reconstruction with Matter:
General Kerr Conditions, a Thin-Shell Benchmark, and a Conserved Torus}
\author{G.~G.~L. Nashed}
\email{nashed@bue.edu.eg}
\affiliation{Centre for Theoretical Physics, The British University in Egypt,
P.O. Box 43, El Sherouk City, Cairo 11837, Egypt}
\affiliation{Centre for Space Research, North-West University,
Potchefstroom 2520, South Africa}
\date{\today}

\begin{abstract}

We formulate the closure problem for first-order, traceful radiation-gauge
metric reconstruction on a Kerr background with matter. The reconstruction is
decomposed into Weyl-curvature, Ricci-trace, and mass--angular-momentum
completion sectors. For compatible initial, gauge, and completion data, we
establish local existence and uniqueness of the resulting triangular transport
system on a regular Kerr domain, and identify the additional conditions needed
for a closed global formulation. This is not a global existence theorem. 
We test the framework using two conserved matter sources. For a thin shell
around a Schwarzschild black hole, Israel matching, regularity, asymptotic
flatness, a barotropic equation of state, and fixed rest mass determine the
static, axial, and leading rotational sectors. The ADM-mass shift is
$\sqrt{1-2M/R}$ times the shell rest mass, while the axial sector determines
the angular momentum, moment of inertia, and interior frame dragging. We also
construct a stationary constant-angular-momentum fluid torus on Kerr and
compute its Newman--Penrose Ricci projections, showing that nonspherical
sources generally excite coupled Ricci sectors. The shell provides a closed
benchmark, while the torus illustrates the additional structure required for
general Kerr sources.
\end{abstract}
 \maketitle
\section{Introduction}
Black-hole perturbation theory describes the response of compact objects to
weak disturbances and underlies gravitational-wave modelling, self-force
calculations, extreme-mass-ratio inspirals, and strong-field tests of general
relativity \cite{Poisson:2011nh,Barack:2018yvs}. Rotating black holes are
therefore naturally studied through perturbations of the Kerr geometry
\cite{Kerr:1963ud,Teukolsky:2014vca}.

The radiative degrees of freedom are encoded in the Teukolsky equations for
the extreme Weyl scalars $\Psi_0^{(1)}$ and $\Psi_4^{(1)}$
\cite{Teukolsky:1972my,Teukolsky:1973ha}. However, many applications require
the metric perturbation $h_{\mu\nu}$. In vacuum, the
Chrzanowski--Cohen--Kegeles construction reconstructs this metric from Hertz
potentials in a radiation gauge
\cite{Cohen:1974cm,Chrzanowski:1975wv,Kegeles:1979an,Wald:1978vm}.

With matter, the radiative Weyl scalars do not determine the Ricci sector
generated by the stress--energy tensor, and perturbations of the background
mass and angular momentum require a separate completion sector
\cite{Wald:1978vm,Ori:2002uv,Keidl:2010pm,Merlin:2016boc}. Li and Yunes
recently developed a traceful radiation-gauge reconstruction for sourced
perturbations of Petrov type-D spacetimes \cite{Li:2026rkf}. In the outgoing
gauge, the Ricci projection $\Phi_{22}^{(1)}$ drives a hierarchy of
first-order transport equations for $\mu^{(1)}$,
$h_{m\bar m}^{(1)}$, and the metric trace.

The main aim of this work is to formulate the closure conditions for sourced,
traceful radiation-gauge metric reconstruction on Kerr and to state precisely
the scope of the result. We separate the Weyl-curvature, Ricci-trace, and
mass--angular-momentum completion sectors, and prove local existence and
uniqueness of the triangular transport hierarchy on a regular Kerr domain,
modulo residual gauge and completion freedoms. Theorem~\ref{thm:local_reconstruction}
establishes this local result, while Proposition~\ref{prop:kerr_closure}
specifies the source, boundary, interface, charge, and kernel conditions
needed for a closed global problem. The latter is conditional and does not
constitute a global existence-and-uniqueness theorem.

We illustrate the formulation using two conserved sources. A thin shell around
a Schwarzschild black hole provides a controlled junction benchmark: its
static, axial, and quadratic rotational sectors determine the mass completion,
angular momentum, interior frame dragging, rotational displacement, and
quadrupole. A stationary constant-angular-momentum perfect-fluid torus on Kerr
provides a smooth nonspherical source and demonstrates the coupling of the
$\Phi_{20}^{(1)}$, $\Phi_{21}^{(1)}$, and $\Phi_{22}^{(1)}$ Ricci channels.

Section~\ref{II} introduces the Kerr background and tetrad, while
Secs.~\ref{III} and \ref{IV} specify the gauges and review the traceful
transport hierarchy \cite{Li:2026rkf}. Sections~\ref{Vclosed} and
\ref{sec:torus} develop the shell and torus applications. Sections~\ref{VI}
and \ref{VII} formulate the general sourced Kerr closure problem and clarify
the status of the resulting framework; Sec.~\ref{VIII} concludes.

\section{Kerr Background}\label{II}

The background spacetime considered in this work is the Kerr solution, which describes a stationary, axisymmetric, asymptotically flat vacuum black hole of mass $M$ and angular momentum $J=aM$ \cite{Kerr:1963ud,Carter1973,Chandrasekhar}. In Boyer--Lindquist coordinates $(t,r,\theta,\phi)$, the metric takes the form

\begin{equation}
ds^2=
-\left(1-\frac{2Mr}{\Sigma}\right)dt^2
-\frac{4Mar\sin^2\theta}{\Sigma}\,dt\,d\phi
+\frac{\Sigma}{\Delta}dr^2
+\Sigma d\theta^2
+\frac{A\sin^2\theta}{\Sigma}d\phi^2,
\end{equation}

where

\begin{equation}
\Delta=r^2-2Mr+a^2,
\qquad
\Sigma=r^2+a^2\cos^2\theta,\quad \mbox{and} \qquad A=(r^2+a^2)^2-a^2\Delta\sin^2\theta.
\end{equation}
The event horizon is located at
\begin{equation}
r_+=
M+\sqrt{M^2-a^2},\qquad \mbox{with $|a|\le M$.}
\end{equation}

To describe perturbations we employ the Newman--Penrose formalism \cite{Newman:1961qr}, based on a complex null tetrad $(l^\mu,n^\mu,m^\mu,\bar m^\mu)$ satisfying

\begin{equation}
l^\mu n_\mu=-1,
\qquad
m^\mu\bar m_\mu=1,
\end{equation}
with all remaining inner products vanishing. We adopt the Kinnersley tetrad \cite{Kinnersley:1969zza},

\begin{align}
l^\mu
&=
\left(
\frac{r^2+a^2}{\Delta},
1,
0,
\frac{a}{\Delta}
\right),
\
n^\mu
&=
\frac{1}{2\Sigma}
\left(
r^2+a^2,
-\Delta,
0,
a
\right),
\
m^\mu
&=
\frac{1}{\sqrt{2}(r+i a\cos\theta)}
\left(
i a\sin\theta,
0,
1,
\frac{i}{\sin\theta}
\right),
\end{align}
with $\bar m^\mu$ obtained by complex conjugation. In this tetrad, the non-vanishing Newman--Penrose spin coefficients are

\begin{align}
\rho
&=
-\frac{1}{r-i a\cos\theta},
\qquad
\mu
=
\frac{\Delta}{2\Sigma}\rho^2\bar\rho,
\qquad
\tau
=
-\frac{i a\sin\theta}{\sqrt{2}}\rho\bar\rho,
\qquad 
\pi
=
\frac{i a\sin\theta}{\sqrt{2}}\rho^2,
\qquad
\beta
=
-\frac{\bar\rho\cot\theta}{2\sqrt{2}},
\nonumber\\
\alpha
&=
\pi-\bar\beta,
\qquad
\gamma
=
\mu+\frac{r-M}{2\Sigma}, \quad \mbox{while} \quad
\kappa=\sigma=\lambda=\nu=\epsilon=0.
\end{align}
The vanishing of $\kappa$, $\sigma$, $\lambda$, and $\nu$ reflects the alignment of the tetrad with the repeated principal null directions of the Kerr geometry and is responsible for the simplifications underlying the Teukolsky formalism \cite{Teukolsky:2014vca,Teukolsky:1973ha,Chandrasekhar}. The Kerr spacetime is Petrov type D, and therefore all background Weyl scalars vanish except $\Psi_2$,
\begin{equation}
\Psi_2=
-\frac{M}{(r-i a\cos\theta)^3},\quad \mbox{or equivalently} \quad
\Psi_0
=
\Psi_1
=
\Psi_3
=
\Psi_4
=
0,
\qquad
\Psi_2\neq0.\label{psi2}
\end{equation}
This algebraic specialization forms the foundation of black-hole perturbation theory on Kerr backgrounds. In particular, perturbations of the extreme Weyl scalars $\Psi_0^{(1)}$ and $\Psi_4^{(1)}$ satisfy the Teukolsky master equation and provide the natural starting point for metric reconstruction \cite{Teukolsky:2014vca,Teukolsky:1973ha}.
\section{Traceful Radiation Gauge}

Throughout the formulation sections, a single bookkeeping parameter
$\lambda$ measures the strength of a generic source.  Thus
\begin{equation}
g_{\mu\nu}(\lambda)=g^{(0)}_{\mu\nu}
+\lambda h^{(1)}_{\mu\nu}+O(\lambda^2),
\qquad
T_{\mu\nu}(\lambda)=\lambda T^{(1)}_{\mu\nu}+O(\lambda^2),
\label{eq:generic_ordering}
\end{equation}
where $g^{(0)}_{\mu\nu}$ is the Kerr metric and every Newman--Penrose
quantity carrying a superscript $(1)$ is the coefficient of
$\lambda$.  No second-order metric terms are used in the general Kerr
reconstruction.  The shell application has an additional, independent
slow-rotation parameter; its bivariate ordering is defined explicitly
in Sec.~\ref{Vclosed}.  The torus application instead uses its total
energy divided by the black-hole mass as the single source-strength
parameter.  These conventions prevent ``first order'' in source
strength from being confused with first order in rotation.
\label{III}

Having established the Kerr background and the associated Newman--Penrose tetrad, we now introduce the gauge framework used in the reconstruction procedure. Since the reconstruction is naturally formulated with respect to the principal null directions of the Kerr spacetime, it is convenient to impose the gauge conditions directly on the tetrad components of the metric perturbation.

Let $h_{\mu\nu}$ denote a first-order perturbation of the Kerr metric $g_{\mu\nu}$. We impose the radiation-gauge condition

\begin{equation}
n^\mu h_{\mu\nu}=0,\quad \mbox{where $n^\mu$ is one of the principal null directions of the Kinnersley tetrad.}
\label{eq:traceful_RG}
\end{equation}
 This condition eliminates all tetrad components containing one projection along $n^\mu$,

\begin{equation}
h_{nn}
=
h_{nl}
=
h_{nm}
=
h_{n\bar m}
=
0.
\label{eq:RG_components}
\end{equation}
Throughout this work, tetrad components are defined by

\begin{equation}
h_{ab}=
h_{\mu\nu}
e_a^\mu e_b^\nu,
\qquad
e_a^\mu
\in
\{l^\mu,n^\mu,m^\mu,\bar m^\mu\}.
\end{equation}
In the conventional Chrzanowski--Cohen--Kegeles (CCK) reconstruction formalism, the radiation gauge is supplemented by the trace-free condition
\begin{equation}
h=0,\quad \mbox{thereby eliminating the trace component \cite{Cohen:1974cm,Chrzanowski:1975wv,Kegeles:1979an}.}
\end{equation}
 Following Li and Yunes~\cite{Li:2026rkf}, we relax this additional restriction and allow the metric trace

\begin{equation}
h
=
g^{\mu\nu}h_{\mu\nu}, \qquad \mbox{to remain nonvanishing. }
\label{eq:trace_definition}
\end{equation}
The gauge adopted here is therefore
\begin{equation}
n^\mu h_{\mu\nu}=0,
\qquad
\text{with no additional condition }h=0.
\label{eq:tracefulORG}
\end{equation}
We refer to this choice as the \emph{traceful radiation gauge}. It preserves
the algebraic simplifications associated with radiation gauges while allowing
the matter-sourced trace to be determined by the field equations. This
statement does not identify the trace as an additional propagating degree of
freedom.

The coordinate gauge condition alone does not fix the perturbation of
the null tetrad.  We adopt the tetrad gauge used in
Ref.~\cite{Li:2026rkf},
\begin{align}
l^{\mu(1)}=\frac12 h_{ll}^{(1)}n^\mu,
\qquad n^{\mu(1)}=0,\qquad
m^{\mu(1)}=h_{lm}^{(1)}n^\mu
-\frac12h_{m\bar m}^{(1)}m^\mu
-\frac12h_{mm}^{(1)}\bar m^\mu .
\label{eq:tetrad_gauge}
\end{align}
This choice follows from tetrad orthogonality together with the three
complex tetrad-rotation freedoms.  In particular, it implies
$\Delta^{(1)}=0$ and $\nu^{(1)}=0$ in the traceful outgoing radiation
gauge.  These conditions are essential for the reduced trace transport
equations.

After imposing Eq.~(\ref{eq:RG_components}), the remaining nonvanishing tetrad components are

\begin{equation}
h_{ll},
\qquad
h_{lm},
\qquad
h_{l\bar m},
\qquad
h_{mm},
\qquad
h_{m\bar m},
\qquad
h_{\bar m\bar m}.
\label{eq:surviving_components}
\end{equation}
The reality of the metric perturbation implies
\begin{equation}
h_{l\bar m}=
\overline{h_{lm}},
\qquad
h_{\bar m\bar m}=
\overline{h_{mm}}, \qquad \mbox{while $h_{ll}$ and $h_{m\bar m}$ are real.}
\end{equation}
Using the Newman--Penrose decomposition of the metric,
\begin{equation}
g_{\mu\nu}=
-l_\mu n_\nu
-n_\mu l_\nu
+m_\mu\bar m_\nu
+\bar m_\mu m_\nu,\quad \mbox{the trace of the perturbation becomes} \quad h
=
-2h_{ln}
+
2h_{m\bar m}.
\end{equation}
Since Eq.~(\ref{eq:traceful_RG}) implies $h_{nl}=0$ and the metric perturbation is symmetric, $h_{ln}=h_{nl}=0$. Consequently,

\begin{equation}
h
=
2h_{m\bar m}.
\label{eq:trace_component}
\end{equation}

Equation~(\ref{eq:trace_component}) shows that the radiation-gauge condition alone does not enforce tracelessness. Rather, the entire trace content of the perturbation is encoded in the single tetrad component $h_{m\bar m}$. In the conventional trace-free CCK framework this component is removed by imposing $h=0$, whereas in the present formulation it remains dynamical and participates in the reconstruction procedure.

The traceful radiation gauge of Li and Yunes~\cite{Li:2026rkf}
provides the starting point for the reconstruction programme adopted
here. It preserves the simplifications associated with the principal
null structure of Kerr while retaining sufficient generality to track
the evolution of the metric trace. As shown in the next section, the
reconstruction separates into a curvature sector governed by perturbed
Weyl scalars and a trace sector governed by the Ricci scalar
$\Phi_{22}^{(1)}$.

The use of radiation gauges is closely tied to the algebraically special character of the Kerr spacetime. Because Kerr is a Petrov type-D vacuum solution with only one non-vanishing background Weyl scalar, $\Psi_2$, gravitational perturbations may be described in terms of the extreme Weyl scalars $\Psi_0$ and $\Psi_4$, which satisfy the Teukolsky master equation \cite{Teukolsky:2014vca,Teukolsky:1973ha}. The CCK--Wald formalism reconstructs metric perturbations from these curvature variables in radiation gauges \cite{Chrzanowski:1975wv,Kegeles:1979an,Wald:1978vm}. The traceful radiation gauge adopted here extends this framework by retaining the trace sector and thereby provides the foundation for the Kerr reconstruction hierarchy developed in the following section.

\section{Kerr Reconstruction Hierarchy}\label{IV}

We now describe the hierarchical structure of the traceful Kerr reconstruction. The purpose of this hierarchy is to identify the sequence of curvature, connection, and metric variables required to reconstruct a metric perturbation on a Kerr background.

The construction is formulated in the Newman--Penrose formalism and exploits the Petrov type-D character of Kerr. The only nonvanishing background Weyl scalar is

\begin{equation}
\Psi_2^{(0)}=
-\frac{M}{(r-i a\cos\theta)^3},\qquad \mbox{while} \qquad
\Psi_0^{(0)}
=
\Psi_1^{(0)}
=
\Psi_3^{(0)}
=
\Psi_4^{(0)}
=
0.
\end{equation}
The radiative degrees of freedom are encoded in the first-order perturbations $\Psi_0^{(1)}$ and $\Psi_4^{(1)}$, which satisfy the Teukolsky master equation \cite{Teukolsky:2014vca,Teukolsky:1973ha}. In this work we focus on the spin$-2$ sector described by $\Psi_4^{(1)}$ and use it as the starting point of the reconstruction hierarchy.

The reconstruction is not an algebraic inversion of the metric. Rather, it is obtained by integrating a hierarchy of transport equations derived from the linearized Newman--Penrose system. The relevant equations include Bianchi identities, which govern the propagation of perturbed Weyl scalars and spin coefficients, together with Ricci identities, tetrad-connection equations, and linearized Einstein equations, which relate spin-coefficient perturbations to metric components. Therefore, the arrows in the schematic chains below denote transport relations obtained from the linearized Newman--Penrose equations rather than direct algebraic equalities.
\subsection*{Origin of the Newman--Penrose hierarchy}

The reconstruction hierarchy follows from the triangular structure of the linearized Newman--Penrose equations on a Petrov type-D background. Since the Kerr tetrad is aligned with the repeated principal null directions, the background quantities
\[
\kappa=\sigma=\lambda=\nu=0,
\qquad
\Psi_0=\Psi_1=\Psi_3=\Psi_4=0,
\]
vanish identically. As a result, their first-order perturbations satisfy simplified transport equations obtained by linearizing the Newman--Penrose Bianchi and Ricci identities.

The first class of equations is provided by the linearized Bianchi identities. These relate perturbations of the Weyl scalars to perturbations of the spin coefficients. In schematic form, the radiative and intermediate curvature sectors take the form
\begin{equation}
\Psi_4^{(1)}
\longrightarrow
\lambda^{(1)},
\qquad
\Psi_3^{(1)}
\longrightarrow
\pi^{(1)}.
\label{eq:NP_Bianchi_origin}
\end{equation}
Here the arrows denote first-order transport relations along the principal null directions, with coefficients determined by the Kerr background spin coefficients.

The second class of equations consists of the linearized Ricci identities and tetrad-connection relations. These equations relate the perturbed spin coefficients to tetrad components of the metric perturbation. Schematically,
\begin{equation}
\lambda^{(1)}
\longrightarrow
h_{\bar m\bar m}^{(1)},
\qquad
\pi^{(1)}
\longrightarrow
h_{l\bar m}^{(1)}.
\label{eq:NP_metric_origin}
\end{equation}

The trace sector arises from the Ricci identity governing the convergence coefficient $\mu$. In the traceful formulation this equation contains the Ricci scalar $\Phi_{22}^{(1)}$ as a source,
\begin{equation}
\Phi_{22}^{(1)}
\longrightarrow
\mu^{(1)}
\longrightarrow
h_{m\bar m}^{(1)}
\longrightarrow
h^{(1)}.
\label{eq:NP_trace_origin}
\end{equation}
This chain is absent in the conventional trace-free CCK construction, where $h^{(1)}=0$ is imposed and the component $h_{m\bar m}^{(1)}$ is removed.


Thus the full hierarchy is generated by a sequence of linearized Newman--Penrose transport equations: Bianchi identities determine the curvature-to-connection part of the reconstruction, while Ricci identities and connection equations determine the connection-to-metric part. The purpose of the following subsections is to organize these relations into curvature and trace sectors adapted to the traceful radiation gauge.

We impose the same radiation-gauge condition as in Sec.~\ref{III},
\begin{equation}
n^\mu h_{\mu\nu}^{(1)}=0, \quad \mbox{so that} \quad h_{nn}^{(1)}
=
h_{nl}^{(1)}
=
h_{nm}^{(1)}
=
h_{n\bar m}^{(1)}
=
0.
\label{eq:ORG_condition_hierarchy}
\end{equation}
The remaining tetrad components are
\begin{equation}
h_{ll}^{(1)},
\qquad
h_{lm}^{(1)},
\qquad
h_{l\bar m}^{(1)},
\qquad
h_{mm}^{(1)},
\qquad
h_{m\bar m}^{(1)},
\qquad
h_{\bar m\bar m}^{(1)}.
\end{equation}
The hierarchy should therefore be viewed as an
organizational structure for solving the coupled
Newman--Penrose system rather than as a sequence
of independent equations.
\subsection*{Curvature Sector}

The highest level of the curvature hierarchy is governed by the radiative Weyl scalar $\Psi_4^{(1)}$. Since

\begin{equation}
\Psi_4^{(0)}=0,
\qquad
\lambda^{(0)}=0,
\end{equation}
the relevant linearized Bianchi identity determines the perturbed shear coefficient $\lambda^{(1)}$ from $\Psi_4^{(1)}$, up to homogeneous solutions and boundary data. The subsequent Newman--Penrose connection equations then determine the metric component $h_{\bar m\bar m}^{(1)}$. Schematically,

\begin{equation}
\Psi_4^{(1)}
\longrightarrow
\lambda^{(1)}
\longrightarrow
h_{\bar m\bar m}^{(1)}.
\label{eq:spin2hierarchy}
\end{equation}
The first arrow represents a transport relation arising
from the linearized Bianchi identities, while the second
represents the determination of the corresponding metric
component through the linearized Newman--Penrose field
equations.

It is useful to distinguish the completion sector from
the trace sector introduced below. The completion problem
is associated with perturbations that are not determined
by the radiative Weyl scalars alone, most notably
variations of the background Kerr parameters,

\begin{equation}
\delta M,
\qquad
\delta J .
\end{equation}

These perturbations are encoded in the Coulombic part of
the gravitational field and are naturally associated with
the Weyl scalar $\Psi_2^{(1)}$.
By contrast, the trace sector considered in the present
work is generated by Ricci curvature produced by matter
sources. Its entry point is the Ricci scalar
$\Phi_{22}^{(1)}$, which drives the transport equation
for $\mu^{(1)}$ and subsequently determines
$h_{m\bar m}^{(1)}$ and the metric trace. Thus, the trace hierarchy
\begin{equation}\label{123}
\Phi_{22}^{(1)}
\longrightarrow
\mu^{(1)}
\longrightarrow
h_{m\bar m}^{(1)}
\longrightarrow
h^{(1)},
\end{equation}
does not belong to the completion sector in the conventional
CCK sense. Rather, it constitutes an independent Ricci-driven
reconstruction channel associated with matter sources and the
metric trace. This sector is absent in the standard vacuum,
trace-free reconstruction framework.

The two sectors may coexist for sourced perturbations:
the completion sector accounts for homogeneous
Coulombic perturbations such as $\delta M$ and
$\delta J$, while the trace sector accounts for the
particular Ricci-sourced response associated with
$T_{\mu\nu}\neq0$. Keeping these sectors conceptually
separate is essential for identifying which part of the
metric perturbation arises from completion data and
which part is driven directly by matter sources.

The next level is controlled by $\Psi_3^{(1)}$. Since

\begin{equation}
\Psi_3^{(0)}=0,
\end{equation}
the corresponding linearized Bianchi identity determines the perturbed spin coefficient $\pi^{(1)}$. The Newman--Penrose connection equations then determine the mixed metric component $h_{l\bar m}^{(1)}$. Thus,

\begin{equation}
\Psi_3^{(1)}
\longrightarrow
\pi^{(1)}
\longrightarrow
h_{l\bar m}^{(1)}.
\label{eq:spin1hierarchy}
\end{equation}

The final curvature level is associated with $\Psi_2^{(1)}$. Unlike $\Psi_3^{(1)}$ and $\Psi_4^{(1)}$, the scalar $\Psi_2$ is nonzero on the Kerr background and describes the Coulombic part of the gravitational field. This sector is closely related to the completion problem of metric reconstruction, since perturbations corresponding to variations of the background mass and angular momentum are not completely determined by the radiative degrees of freedom alone \cite{Wald:1978vm,Ori:2002uv,Keidl:2010pm,Merlin:2016boc}. Within the present hierarchy, the transport equations associated with $\Psi_2^{(1)}$ contribute to the determination of the remaining longitudinal metric component,

\begin{equation}
\Psi_2^{(1)}
\longrightarrow
h_{ll}^{(1)}.
\label{eq:coulombhierarchy}
\end{equation}

The curvature sector is therefore summarized as

\begin{equation}
\Psi_4^{(1)}
\longrightarrow
\lambda^{(1)}
\longrightarrow
h_{\bar m\bar m}^{(1)},
\qquad
\Psi_3^{(1)}
\longrightarrow
\pi^{(1)}
\longrightarrow
h_{l\bar m}^{(1)},
\qquad
\Psi_2^{(1)}
\longrightarrow
h_{ll}^{(1)}.
\label{eq:curvature_hierarchy}
\end{equation}
This sector reproduces, at the level of Newman--Penrose transport equations, the logical structure underlying radiation-gauge metric reconstruction on Kerr.

\subsection*{Trace Sector}

The principal extension introduced in the traceful formulation is the retention of the metric trace. In the conventional trace-free CCK reconstruction one imposes

\begin{equation}
h^{(1)}=0,\quad \mbox{which, in the present gauge, is equivalent to setting}\quad h_{m\bar m}^{(1)}=0.
\end{equation}

Here this restriction is not imposed.
The trace sector is generated through the perturbation
of the spin coefficient $\mu$. To write the corresponding
transport equations compactly, we introduce the weighted
Newman--Penrose transport operator

\begin{equation}
\Delta_{[p,q,r,s]} X
\equiv
\left(
\Delta+p\,\mu+q\,\bar\mu+r\,\gamma+s\,\bar\gamma
\right)X,
\qquad \Delta\equiv n^\mu\nabla_\mu .
\label{eq:weighted_Delta_operator}
\end{equation}

The ordering of the weights in this definition follows the convention of
Ref.~\cite{Li:2026rkf}.  The reduced equation is obtained directly from the
linearized Ricci identity,
\begin{equation}
\Delta_{[2,0,1,1]}
\mu^{(1)}
+\mu^{(0)}\left(\gamma^{(1)}+\bar\gamma^{(1)}\right)
+\Phi_{22}^{(1)}=0 .
\label{eq:mutransport}
\end{equation}
Here $\Phi_{22}^{(1)}=R_{\mu\nu}^{(1)}n^\mu n^\nu/2$.
The commutation relations in the tetrad gauge
\eqref{eq:tetrad_gauge} yield
\begin{equation}
\mu^{(1)}=\frac12\Delta_{[-1,1,0,0]}
h_{m\bar m}^{(1)},
\qquad
\gamma^{(1)}=-\frac14(\mu-\bar\mu)h_{m\bar m}^{(1)}.
\label{eq:mu_gamma_metric}
\end{equation}
Because $h_{m\bar m}^{(1)}$ is real,
$\gamma^{(1)}+\bar\gamma^{(1)}=0$.  Equation
\eqref{eq:mutransport} therefore reduces to
\begin{equation}
\Delta_{[2,0,1,1]}\mu^{(1)}=-\Phi_{22}^{(1)}.
\label{eq:mutransport_reduced}
\end{equation}
This reduction also uses $\lambda^{(0)}=\nu^{(0)}=0$ for the type-D
background and $\Delta^{(1)}=\nu^{(1)}=0$ in the tetrad gauge above
\cite{Li:2026rkf}.  We denote the Kerr radial function by
$\Delta_K=r^2-2Mr+a^2$ to distinguish it from the directional
derivative $\Delta$.

For stationary axisymmetric fields, $\partial_t X=\partial_\phi X=0$, and therefore
\begin{equation}
n^\mu\nabla_\mu X=
-\frac{\Delta_K}{2\Sigma}\,
\partial_r X .
\label{eq:n_direction_stationary}
\end{equation}
Thus the operator $\Delta_{[p,q,r,s]}$ becomes a first-order radial transport operator after angular-mode projection.  In vacuum, the Ricci-driven trace equations become homogeneous.
With boundary conditions and residual gauge choices that remove the
homogeneous trace sector, the reconstruction reduces to the usual
vacuum radiation-gauge formulation.

Once $\mu^{(1)}$ is determined, the trace component of the metric perturbation follows from

\begin{equation}
\Delta_{[-1,1,0,0]}h_{m\bar m}^{(1)}=2\mu^{(1)}.
\label{eq:trace_transport}
\end{equation}

Using Eq.~(\ref{eq:trace_component}) at first order gives

\begin{equation}
h^{(1)}=
2h_{m\bar m}^{(1)}.
\end{equation}

Therefore the trace hierarchy is

\begin{equation}
\Phi_{22}^{(1)}
\longrightarrow
\mu^{(1)}
\longrightarrow
h_{m\bar m}^{(1)}
\longrightarrow
h^{(1)}.
\label{eq:tracehierarchy}
\end{equation}

The reconstruction procedure thus separates naturally into a curvature
sector governed by Weyl scalars and a trace sector governed by
$\Phi_{22}^{(1)}$.  This Ricci-driven trace sector is the essential
ingredient introduced by Li and Yunes~\cite{Li:2026rkf}; here we apply
their framework to the closed shell model developed below.

\section{Application I: A Closed Schwarzschild Shell and Its Slowly Rotating Sector}
\label{Vclosed}

We first present a closed static-shell calculation in which the source,
differential equations, boundary data, junction conditions, and
physical charges are fixed.  We then close the leading axial dipole by
specifying a rigid shell angular velocity and deriving the associated
surface current and angular momentum.  The construction follows the
static-shell test of the traceful reconstruction method in
Ref.~\cite{Li:2026rkf}; the rotating shell developed below supplies
the additional slowly rotating sector.

\subsection{Two-parameter perturbative ordering}

The shell problem contains two independent small quantities. We
introduce
\begin{equation}
\epsilon_M\equiv\frac{\delta M^{(1,0)}}{M}
=\frac{\sqrt{f(R)}M_{\mathrm{s}}}{M},
\qquad
\epsilon_\Omega\equiv
\frac{R\Omega_{\mathrm{s}}}{\sqrt{f(R)}} ,
\label{eq:two_parameters}
\end{equation}
without assuming that their numerical magnitudes are equal. The metric
and surface stress tensor are organized as bivariate expansions,
\begin{align}
g_{\mu\nu}(\epsilon_M,\epsilon_\Omega)
&=g_{\mu\nu}^{(0,0)}
+\epsilon_M h_{\mu\nu}^{(1,0)}
+\epsilon_M\epsilon_\Omega h_{\mu\nu}^{(1,1)}
+\epsilon_M\epsilon_\Omega^2h_{\mu\nu}^{(1,2)}
+O(\epsilon_M^2,\epsilon_M\epsilon_\Omega^3),
\label{eq:bivariate_metric}\\
S_{AB}(\epsilon_M,\epsilon_\Omega)
&=\epsilon_M S_{AB}^{(1,0)}
+\epsilon_M\epsilon_\Omega S_{AB}^{(1,1)}
+\epsilon_M\epsilon_\Omega^2S_{AB}^{(1,2)}
+O(\epsilon_M^2,\epsilon_M\epsilon_\Omega^3).
\label{eq:bivariate_surface}
\end{align}
The static spherical perturbation belongs to the $(1,0)$ sector, the
leading axial response belongs to the mixed $(1,1)$ sector, and the
quadratic even-parity response belongs to $(1,2)$. There is no
$(0,1)$ contribution because an angular velocity
produces no shell stress-energy when the shell mass vanishes. Thus the
calculation remains linear in the shell strength but is carried
through quadratic order in the independent rotation parameter,
without requiring $\epsilon_M$ and $\epsilon_\Omega$ to have the same
magnitude.
In Secs.~\ref{II}--\ref{IV}, the superscript $(1)$ retains its usual
meaning of a generic linear perturbation. In the explicit shell
calculation, a static quantity carrying this superscript belongs to
the $(1,0)$ sector, whereas an axial quantity carrying it belongs to
the $(1,1)$ sector.

A systematic extension can be organized as
\begin{equation}
g_{\mu\nu}=\sum_{p,q\geq0}\epsilon_M^p\epsilon_\Omega^q
g_{\mu\nu}^{(p,q)} .
\label{eq:general_bivariate}
\end{equation}
At each bidegree $(p,q)$ the linearized Einstein operator acts on the
new coefficient $g_{\mu\nu}^{(p,q)}$, while its source contains the
matter coefficient at that order and combinations already determined
at lower bidegrees. The sectors $(2,0)$, $(1,2)$, and $(2,1)$ describe,
respectively, static nonlinear backreaction, rotational deformation
and quadrupolar structure, and nonlinear mass--rotation coupling. This
paper solves the $(1,0)$, $(1,1)$, and $(1,2)$ shell sectors and
obtains the exact static mass through $(2,0)$; the complete
$h_{\mu\nu}^{(2,0)}$ field and the $(2,1)$ sector are not reconstructed.

To make the order-by-order prescription explicit, write the Einstein
tensor as
\begin{equation}
G_{\mu\nu}[g^{(0,0)}+h]
=G_{\mu\nu}^{(0,0)}
+G_{\mu\nu}^{[1]}[h]
+G_{\mu\nu}^{[2]}[h,h]+\cdots ,
\label{eq:einstein_expansion}
\end{equation}
where $G_{\mu\nu}^{[2]}$ is defined as a symmetric bilinear operator.
The first equations in the bivariate hierarchy are then
\begin{align}
G_{\mu\nu}^{[1]}[h^{(1,0)}]
&=8\pi T_{\mu\nu}^{(1,0)},\label{eq:order10}\\
G_{\mu\nu}^{[1]}[h^{(1,1)}]
&=8\pi T_{\mu\nu}^{(1,1)},\label{eq:order11}\\
G_{\mu\nu}^{[1]}[h^{(2,0)}]
&=8\pi T_{\mu\nu}^{(2,0)}
-G_{\mu\nu}^{[2]}[h^{(1,0)},h^{(1,0)}],
\label{eq:order20}\\
G_{\mu\nu}^{[1]}[h^{(1,2)}]
&=8\pi T_{\mu\nu}^{(1,2)},\label{eq:order12}\\
G_{\mu\nu}^{[1]}[h^{(2,1)}]
&=8\pi T_{\mu\nu}^{(2,1)}
-2G_{\mu\nu}^{[2]}[h^{(1,0)},h^{(1,1)}].
\label{eq:order21}
\end{align}
Thus the same linearized operator is solved at every bidegree, with
lower-order gravitational fields entering as known effective sources.
At a given bidegree, the corresponding effective Ricci tensor can be
projected on the background Newman--Penrose tetrad to generate source
terms for a linear transport problem. This supplies systematic
Einstein-equation bookkeeping, but it is not by itself a proof of a
complete nonlinear radiation-gauge reconstruction. At higher orders
one must additionally include tetrad and gauge corrections and verify
that the chosen radiation gauge remains admissible. Nor does the
hierarchy remove the need to specify the material coefficients
$T_{\mu\nu}^{(p,q)}$.

\subsection{Conserved static shell}

Let
\begin{equation}
 f(r)=1-\frac{2M}{r}
\end{equation}
and place a spherical shell of areal radius $R>2M$ and rest mass
$M_{\mathrm {s}}\ll M$ around the black hole.  The spacetime on each side of
the shell is Schwarzschild,
\begin{equation}
 ds_\pm^2=-f_\pm(r_\pm)dt_\pm^2+f_\pm^{-1}(r_\pm)dr_\pm^2
 +r_\pm^2d\Omega^2,
 \qquad f_\pm(r)=1-\frac{2M_\pm}{r},
\label{eq:shell_regions}
\end{equation}
with $M_-=M$ and $M_+=M+\delta M$.  The Israel junction
conditions~\cite{Israel:1966rt}
give the surface tensor
\begin{equation}
 S_{AB}=\sigma u_Au_B+p(q_{AB}+u_Au_B),
\label{eq:surface_tensor}
\end{equation}
where, writing $\alpha_\pm=\sqrt{f_\pm(R)}$,
\begin{align}
 \sigma=\frac{\alpha_--\alpha_+}{4\pi R},\qquad 
 p=\frac{1}{8\pi R^2}
 \left(\frac{R-M_+}{\alpha_+}-\frac{R-M_-}{\alpha_-}\right).
\label{eq:sigma_pressure_exact}
\end{align}
Defining the shell rest mass by
$M_{\mathrm{s}}=4\pi R^2\sigma$, the first of
Eqs.~\eqref{eq:sigma_pressure_exact} can be solved without expanding:
\begin{equation}
\sqrt{f_+(R)}
=\sqrt{f(R)}-\frac{M_{\mathrm{s}}}{R}.
\label{eq:exact_alpha_relation}
\end{equation}
The chosen branch requires
$0<M_{\mathrm{s}}<R\sqrt{f(R)}$.
Squaring this relation gives the exact mass relation for the static
thin shell,
\begin{equation}
{\delta M
=\sqrt{f(R)}\,M_{\mathrm{s}}
-\frac{M_{\mathrm{s}}^2}{2R}}.
\label{eq:exact_mass_relation}
\end{equation}
Consequently, the first two static orders are
\begin{equation}
\delta M^{(1,0)}=\sqrt{f(R)}\,M_{\mathrm{s}},
\qquad
\delta M^{(2,0)}=-\frac{M_{\mathrm{s}}^2}{2R}.
\label{eq:mass_orders}
\end{equation}
Equivalently, in terms of the parameter in
Eq.~\eqref{eq:two_parameters},
\begin{equation}
\frac{\delta M}{M}
=\epsilon_M-\frac{M}{2Rf(R)}\epsilon_M^2 .
\label{eq:mass_lambda_expansion}
\end{equation}
The second term is the leading binding-energy correction and provides
an explicit higher-order result within the static sector.

The pressure is not optional: it supplies the force required to keep
the shell static and makes the distributional stress tensor conserved.
At the first static order,
\begin{equation}
 \delta M=\sqrt{f(R)}\,M_{\mathrm {s}},
 \qquad
 \sigma=\frac{\delta M}{4\pi R^2\sqrt{f(R)}},
 \qquad
 p=\frac{M\delta M}{8\pi R^3 f^{3/2}(R)}.
\label{eq:linear_shell_data}
\end{equation}
Here and in the first-order Newman--Penrose calculation below,
$\delta M$ on the right-hand side denotes its leading value
$\sqrt{f(R)}M_{\mathrm{s}}$. The exact pressure may also be expressed
in terms of $M_{\mathrm{s}}$ as
\begin{equation}
p=
\frac{M_{\mathrm{s}}}{8\pi R^2}
\frac{M/R+\sqrt{f(R)}\,M_{\mathrm{s}}/(2R)}
{\sqrt{f(R)}
\left[\sqrt{f(R)}-M_{\mathrm{s}}/R\right]} .
\label{eq:exact_pressure_ms}
\end{equation}
Its expansion begins as
\begin{equation}
p=
\frac{M M_{\mathrm{s}}}{8\pi R^3 f(R)}
+\frac{M_{\mathrm{s}}^2}{16\pi R^3 f^{3/2}(R)}
+O(M_{\mathrm{s}}^3/R^4).
\label{eq:pressure_second_order}
\end{equation}
At the perturbative order considered here, the ratio of the tangential
pressure to the surface energy density is
\begin{equation}
\frac{p}{\sigma}
=
\frac{M}{2R f(R)}
=
\frac{M}{2(R-2M)}.
\label{eq:pressure_density_ratio}
\end{equation}
Since $\sigma>0$ and $p>0$ for $R>2M$, the weak energy condition is
satisfied. The dominant energy condition additionally requires
$p\leq\sigma$, which gives
\begin{equation}
R\geq\frac{5M}{2}.
\label{eq:dominant_energy_condition}
\end{equation}
Consequently, shells with $2M<R<5M/2$ remain mathematically admissible
thin-shell solutions but violate the dominant energy condition because
their tangential pressure exceeds their surface energy density. When
a shell satisfying the dominant energy condition is required, the
physical domain of the present first-order model should therefore be
restricted to $R\geq5M/2$. 
The four-dimensional source is
\begin{equation}
 T^{\mu\nu}=\left[\sigma u^\mu u^\nu
 +p\left(e_{\hat\theta}^{\mu}e_{\hat\theta}^{\nu}
 +e_{\hat\phi}^{\mu}e_{\hat\phi}^{\nu}\right)\right]\delta(\ell),
 \qquad
 \ell=\frac{r-R}{\sqrt{f(R)}}.
\label{eq:conserved_shell_source}
\end{equation}
Equations \eqref{eq:sigma_pressure_exact} and
\eqref{eq:conserved_shell_source} imply
$\nabla_\mu T^{\mu\nu}=0$ in the distributional sense.

On the Schwarzschild background we use
\begin{equation}
 l^\mu=\left(f^{-1},1,0,0\right),\qquad
 n^\mu=\frac12\left(1,-f,0,0\right),\qquad
 m^\mu=\frac{1}{\sqrt{2}r}(0,0,1,i\csc\theta).
\label{eq:schwarzschild_tetrad}
\end{equation}
The nonzero spin coefficients required below are
\begin{equation}
 \rho=-\frac1r,\qquad
 \mu=-\frac{f}{2r},\qquad
 \gamma=\frac{M}{2r^2},\qquad
 \alpha=-\beta=-\frac{\cot\theta}{2\sqrt2 r}.
\end{equation}
Following the static-shell calculation of Li and
Yunes~\cite{Li:2026rkf}, the parameter
$\epsilon_M=\delta M^{(1,0)}/M$ introduced
in Eq.~\eqref{eq:two_parameters} controls the static expansion. The
nonzero coefficients of the first-order NP source
projections are
\begin{align}
\Phi_{00}^{(1)}
&=
\frac{M}{R^2f(R)}\delta(r-R),
&
\Phi_{22}^{(1)}
&=
\frac{Mf(R)}{4R^2}\delta(r-R),
\nonumber\\
\Phi_{11}^{(1)}
&=
\frac{M(R-M)}{4R^3f(R)}\delta(r-R),
&
\Lambda_{\mathrm{NP}}^{(1)}
&=
\frac{M(R-3M)}{12R^3f(R)}\delta(r-R).
\label{eq:shell_ricci_scalars}
\end{align}
Here, $\Lambda_{\mathrm{NP}}^{(1)}=R^{(1)}/24$ and should not be confused with the cosmological constant. The appearance of all four projections explains why $\Phi_{22}^{(1)}$ alone is insufficient to close the full metric perturbation problem.

Spherical symmetry sets the spin-2 and spin-1 amplitudes to zero,
\begin{equation}
 \Psi_4^{(1)}=\lambda^{(1)}=h_{\bar m\bar m}^{(1)}=0,
 \qquad
 \Psi_3^{(1)}=\pi^{(1)}=h_{l\bar m}^{(1)}=0.
\end{equation}
The trace equations reduce to
\begin{align}
 \left(\frac{f}{2}\frac{d}{dr}+\frac{r-3M}{r^2}\right)
 \mu^{(1)}&=\Phi_{22}^{(1)},\label{eq:mu_shell_ode}\\
 \frac{f}{2}\frac{d h_{m\bar m}^{(1)}}{dr}&=-2\mu^{(1)}.
\label{eq:hmm_shell_ode}
\end{align}
Eliminating $\mu^{(1)}$ gives the distributional equation
\begin{equation}
 \frac{d}{dr}\left(r^2\frac{d h_{m\bar m}^{(1)}}{dr}\right)
 =-\frac{2M}{f(R)}\delta(r-R).
\label{eq:hmm_decoupled}
\end{equation}
We require a regular interior solution, an asymptotically Schwarzschild
exterior, continuity of $h_{m\bar m}^{(1)}$, and the jump condition
obtained by integrating Eq.~\eqref{eq:hmm_decoupled}.  These conditions
fix all homogeneous constants and give
\begin{equation}
 h_{m\bar m}^{(1)}=\frac{2M}{Rf(R)}
 \left(1-\frac{R}{r}\right)\Theta(R-r),
 \qquad
 \mu^{(1)}=-\frac{Mf(r)}{2f(R)r^2}\Theta(R-r).
\label{eq:trace_solution_shell}
\end{equation}
Thus
\begin{equation}
 h^{(1)}=2h_{m\bar m}^{(1)}
 =\frac{4M}{Rf(R)}\left(1-\frac{R}{r}\right)\Theta(R-r).
\label{eq:metric_trace_shell}
\end{equation}

The remaining spin-zero completion equation contains
$\Psi_2^{(1)}$, $\Lambda_{\mathrm { NP}}^{(1)}$, and $\mu^{(1)}$.
Solving it with a mass shift $\delta M$ outside the shell and imposing
continuity at $r=R$ gives
\begin{align}
 h_{ll}^{+(1)}=\frac{8M}{r f^2(r)},\qquad 
 h_{ll}^{-(1)}=\frac{8M}{Rf^2(r)f(R)}
 \left(1-\frac{3M}{r}+\frac{MR}{r^2}\right).
\label{eq:hll_shell_solution}
\end{align}
The superscripts $+$ and $-$ denote $r>R$ and $2M<r<R$.
In the traceful outgoing radiation gauge,
\begin{align}
h_{tt}^{(1)}
&=
f(r)h_{tr}^{(1)}
=
f^2(r)h_{rr}^{(1)}
=
\frac{f^2(r)}{4}h_{ll}^{(1)},
\nonumber\\
h_{\phi\phi}^{(1)}
&=
\sin^2\theta\,h_{\theta\theta}^{(1)}
=
r^2\sin^2\theta\,h_{m\bar m}^{(1)}.
\label{eq:shell_coordinate_components}
\end{align}
Within the tetrad, gauge, and delta-function conventions adopted here,
these expressions reproduce the static-shell solution reported by Li
and Yunes~\cite{Li:2026rkf}.  The exterior $1/r$ coefficient gives
the definite observable
\begin{equation}
 {\delta M_{\mathrm {ADM}}=\sqrt{f(R)}\,M_{\mathrm {s}}}.
\label{eq:adm_mass_result}
\end{equation}

\subsection{Closed slowly rotating axial sector}

We now allow the shell to rotate rigidly with angular velocity
$\Omega_{\mathrm{s}}$ as measured relative to the asymptotic time
coordinate. Rotation is treated to first order in $\epsilon_\Omega$,
and the
retained axial response is the mixed $(1,1)$ sector. Changes in the
surface density, pressure, and shape
begin only at second rotational order and do
not enter the present calculation, as in the standard slow-rotation
expansion~\cite{Hartle:1967he}.  The shell four-velocity is
\begin{equation}
u^\mu=u^t(1,0,0,\Omega_{\mathrm{s}}),
\qquad
u^t=\frac{1}{\sqrt{f(R)}}+O(\Omega_{\mathrm{s}}^2).
\label{eq:rotating_shell_velocity}
\end{equation}
The surface tensor retains the perfect-fluid form
\begin{equation}
S^A{}_B=(\sigma+p)u^A u_B+p\,\delta^A{}_B,\quad \mbox{with $\sigma$ and $p$ given by Eq.~\eqref{eq:linear_shell_data}.
}
\label{eq:rotating_surface_tensor}
\end{equation}

Introduce the frame-dragging function in the standard slow-rotation
form~\cite{Hartle:1967he,Brill:1966rt}
\begin{equation}
\omega(r)=-\frac{h_{t\phi}^{(1)}}{r^2\sin^2\theta}.
\label{eq:omega_definition}
\end{equation}
The angular velocity of the shell relative to the local inertial
frames is $\Omega_{\mathrm{s}}-\omega(R)$. At the retained axial order,
the mixed angular-momentum surface density is therefore
\begin{equation}
S^t{}_{\phi}
=\frac{(\sigma+p)R^2\sin^2\theta}{f(R)}
\left[\Omega_{\mathrm{s}}-\omega(R)\right].
\label{eq:surface_angular_current}
\end{equation}
The corresponding conserved angular momentum is
\begin{align}
\delta J
&=\int T^t{}_{\phi}\sqrt{-g}\,
dr\,d\theta\,d\phi
=\frac{8\pi R^4}{3\sqrt{f(R)}}(\sigma+p)
\left[\Omega_{\mathrm{s}}-\omega(R)\right].
\label{eq:J_surface_integral}
\end{align}
It is convenient to define the relativistic shell moment of inertia
\begin{equation}
I_{\mathrm{s}}
\equiv
\frac{8\pi R^4}{3\sqrt{f(R)}}(\sigma+p).
\label{eq:shell_moment_inertia_definition}
\end{equation}
Using Eq.~\eqref{eq:linear_shell_data}, we obtain
\begin{equation}
{
I_{\mathrm{s}}
=\frac{\delta M R(2R-3M)}{3f^2(R)}
}.
\label{eq:shell_moment_inertia}
\end{equation}
This expression reduces to
$I_{\mathrm{s}}=2M_{\mathrm{s}}R^2/3$ in the weak-field limit, as
required for a thin spherical shell.

The distributional $t\phi$ component of the linearized Einstein
equations reduces to
\begin{equation}
\frac{d}{dr}\left(r^4\frac{d\omega}{dr}\right)
=-6\delta J\,\delta(r-R).
\label{eq:sourced_dragging_equation}
\end{equation}
Away from the shell the right-hand side vanishes.  Asymptotic flatness
eliminates the constant exterior mode.  In the interior we impose that
the central black hole has no intrinsic angular momentum, which removes
the $r^{-3}$ homogeneous mode.  Continuity of the induced metric then
gives
\begin{equation}
\omega(r)=
\begin{cases}
2\delta J/R^3, & 2M<r<R,\\[2mm]
2\delta J/r^3, & r>R.
\end{cases}
\label{eq:dragging_solution_closed}
\end{equation}
Integration of Eq.~\eqref{eq:sourced_dragging_equation} across the shell
gives the junction condition
\begin{equation}
\left[r^4\omega'(r)\right]_{R^-}^{R^+}=-6\delta J, \quad \mbox{which is satisfied by Eq.~\eqref{eq:dragging_solution_closed}.}
\label{eq:dragging_jump_closed}
\end{equation}

In the traceful outgoing radiation gauge, the nonzero axial components
may be written as
\begin{equation}
h_{t\phi}^{(1)}=-r^2\sin^2\theta\,\omega(r),
\qquad
h_{r\phi}^{(1)}=\frac{h_{t\phi}^{(1)}}{f(r)}, \quad \mbox{and hence $n^\mu h_{\mu\phi}^{(1)}=0$. }
\label{eq:axial_org_components_closed}
\end{equation}
 Although $h_{t\phi}^{(1)}$ remains finite at $r=2M$, the component
$h_{r\phi}^{(1)}=h_{t\phi}^{(1)}/f(r)$ diverges there in Schwarzschild
coordinates. This divergence is a coordinate feature of the
radiation-gauge representation and does not by itself indicate a
curvature singularity. Indeed, introducing the horizon-penetrating
advanced coordinate
\begin{equation}
v=t+r_*,
\qquad
\frac{dr_*}{dr}=\frac{1}{f(r)}, \quad \mbox{gives} \quad 
dt=dv-\frac{dr}{f(r)}.
\end{equation}
Consequently, the apparently divergent radial cross term cancels,
\begin{equation}
h_{r\phi}^{(1)}
-\frac{h_{t\phi}^{(1)}}{f(r)}
=0.
\end{equation}
 Substitution into the
distributional $t\phi$ Einstein equation reproduces the surface current
in Eq.~\eqref{eq:surface_angular_current}.

Since $I_{\mathrm{s}}=O(\epsilon_M)$ and
$\omega(R)=O(\epsilon_M\epsilon_\Omega)$,
Eq.~\eqref{eq:J_surface_integral} gives, at the retained mixed
bidegree $(1,1)$,
\begin{equation}
\delta J=I_{\mathrm{s}}\Omega_{\mathrm{s}}
\left[1+O(\epsilon_M,\epsilon_\Omega^2)\right].
\label{eq:J_Omega_relation}
\end{equation}
We do not retain the formally resummed denominator generated by
substituting $\omega(R)=2\delta J/R^3$ into
Eq.~\eqref{eq:J_surface_integral}, because that denominator contains
terms of order $\epsilon_M^2\epsilon_\Omega$ and higher, beyond the
bidegrees
retained here.

The constant interior dragging and the $r^{-3}$ exterior behavior are
the standard first-order results for a slowly rotating spherical
shell~\cite{Brill:1966rt}.  The coefficient of the exterior
$r^{-3}$ solution gives
\begin{equation}
{J_{\mathrm{Komar}}=\delta J},
\label{eq:komar_angular_momentum}
\end{equation}
while the interior inertial frames rotate rigidly at
\begin{equation}
{
\Omega_{\mathrm{drag}}
=\frac{2\delta J}{R^3}
=\frac{2\delta M(2R-3M)}{3R^2f^2(R)}
\Omega_{\mathrm{s}}
}
+O(\delta M^2\Omega_{\mathrm{s}}).
\label{eq:dragging_rate_closed}
\end{equation}
Thus the dragging fraction is
\begin{equation}
{
\frac{\Omega_{\mathrm{drag}}}{\Omega_{\mathrm{s}}}
=\frac{2\delta M(2R-3M)}{3R^2f^2(R)}
}
+O(\delta M^2).
\label{eq:dragging_fraction}
\end{equation}
The perturbative validity of these results requires more than
$M_{\mathrm{s}}\ll M$ when the shell approaches the horizon. Because
the coefficients of the static and axial perturbations contain inverse
powers of $f(R)$, a sufficient set of conditions for the present
first-order treatment is
\begin{equation}
\frac{\delta M}{R f(R)}\ll1,
\qquad
\frac{\delta M(2R-3M)}{R^2f^2(R)}\ll1,
\label{eq:near_horizon_mass_conditions}
\end{equation}
together with the slow-rotation condition
\begin{equation}
\frac{R\left|\Omega_{\mathrm{s}}-\omega(R)\right|}
{\sqrt{f(R)}}\ll1.
\label{eq:local_rotation_condition}
\end{equation}
The first condition controls the static metric perturbation, while the
second controls the rotational response and the expansion leading to
Eq.~\eqref{eq:J_Omega_relation}. The last condition requires the
physical tangential velocity of the shell, measured relative to the
local inertial frames, to remain small. Consequently, although the
shell is formally located at $R>2M$, the present first-order expansion
is not uniform in the limit $R\rightarrow2M$ and should not be used
arbitrarily close to the horizon. 
Together, Eqs.~\eqref{eq:adm_mass_result},
\eqref{eq:shell_moment_inertia}, \eqref{eq:J_Omega_relation}, and
\eqref{eq:dragging_fraction} provide definite mass, rotational, and
frame-dragging observables for the closed shell model.

The next rotational sector, $(1,2)$, is qualitatively different from
the axial calculation above. At that order the shell generally
develops an even-parity deformation, and its surface density and
stresses acquire rotational corrections. Conservation and the
junction conditions alone do not determine all of these material
responses. An equation of state or an equivalent constitutive law,
together with a prescription for the perturbed shell shape, is
therefore required to determine $T_{\mu\nu}^{(1,2)}$ and close
Eq.~\eqref{eq:order12}. We do not impose such additional material
physics in the leading axial calculation.  Instead, the following
subsection states the additional assumptions separately and uses them
to turn the $(1,2)$ sector into a closed, testable extension of the
benchmark.

\subsection{Barotropic closure at quadratic order in rotation}
\label{subsec:barotropic_shell}

We now make the additional material assumption needed to formulate the
next rotational sector. Following the standard Hartle construction
\cite{Hartle:1967he} and its thin-shell implementation
\cite{Uchikata:2016}, the shell is displaced according to
\begin{equation}
r_{\Sigma}(\theta)
=R+\epsilon_\Omega^2
\left[\xi_0+\xi_2P_2(\cos\theta)\right]
+O(\epsilon_\Omega^4),
\label{eq:shell_displacement_second}
\end{equation}
and the even-parity surface variables take the form
\begin{align}
\sigma(\theta)
&=\sigma_0+\epsilon_\Omega^2
\left[\delta\sigma_0+\delta\sigma_2P_2(\cos\theta)\right]
+O(\epsilon_\Omega^4),\label{eq:sigma_second_rotation}\\
p(\theta)
&=p_0+\epsilon_\Omega^2
\left[\delta p_0+\delta p_2P_2(\cos\theta)\right]
+O(\epsilon_\Omega^4).
\label{eq:p_second_rotation}
\end{align}
Reflection symmetry and invariance under reversal of the rotation
direction exclude odd powers of $\epsilon_\Omega$ from these scalar
quantities. The quadratic rotational sector therefore contains only
the monopole and quadrupole harmonics.

For a barotropic shell, $p=p(\sigma)$, the material perturbations obey
\begin{equation}
\delta p_\ell=c_{\mathrm{s}}^2\delta\sigma_\ell,
\qquad \ell=0,2,
\qquad
c_{\mathrm{s}}^2\equiv
\left.\frac{dp}{d\sigma}\right|_{\sigma_0}.
\label{eq:shell_barotropic_closure}
\end{equation}
This is the constitutive information absent from the leading axial
calculation. As a definite closure, consider the one-parameter family
of leading shell backgrounds obtained by varying $R$ at fixed
$M_{\mathrm{s}}$. From
\begin{equation}
\sigma_0=\frac{M_{\mathrm{s}}}{4\pi R^2},
\qquad
p_0=\frac{M M_{\mathrm{s}}}
{8\pi R^2(R-2M)},
\label{eq:background_shell_eos_family}
\end{equation}
we obtain
\begin{equation}
c_{\mathrm{s}}^2
=\frac{dp_0/dR}{d\sigma_0/dR}
=\frac{M(3R-4M)}{4(R-2M)^2}.
\label{eq:shell_sound_speed}
\end{equation}
Thus this particular barotropic sequence is causal,
$0\leq c_{\mathrm{s}}^2\leq1$, only when
\begin{equation}
\frac{R}{M}\geq
\frac{19+\sqrt{41}}{8}
\approx3.175 .
\label{eq:shell_causal_radius}
\end{equation}
This restriction is stronger than the leading dominant-energy
condition $R/M\geq5/2$.

The metric at the same rotational order may be written in the
Hartle form
\begin{align}
ds^2={}&-f(r)\left[1+2\epsilon_\Omega^2
\left(h_0+h_2P_2\right)\right]dt^2
+\frac{1}{f(r)}
\left[1+\frac{2\epsilon_\Omega^2
\left(m_0+m_2P_2\right)}{rf(r)}\right]dr^2
\nonumber\\
&+r^2\left[1+2\epsilon_\Omega^2 k_2P_2\right]
\left\{d\theta^2+\sin^2\theta
\left[d\phi-\epsilon_\Omega\varpi(r)dt\right]^2\right\}
+O(\epsilon_\Omega^3),
\label{eq:hartle_shell_metric}
\end{align}
where $\varpi$ is the coefficient of the first-order dragging
function after extracting $\epsilon_\Omega$, and the radial functions
are understood separately on the two sides of the shell. For the
matter-supported solution, $\varpi,h_\ell,m_\ell$, and $k_2$ vanish
with $\epsilon_M$. The first and second junction conditions are
\begin{equation}
\left[q_{ab}\right]=0,
\qquad
S_{ab}=-\frac{1}{8\pi}
\left(\left[K_{ab}\right]-q_{ab}[K]\right).
\label{eq:second_order_junction_system}
\end{equation}
After inserting Eqs.~\eqref{eq:shell_displacement_second} and
\eqref{eq:hartle_shell_metric}, these equations separate into
$\ell=0$ and $\ell=2$ linear systems. The unknown monopole constants
include the rotational mass correction and $\xi_0$; the quadrupole
system determines $\xi_2$, the exterior quadrupole constant, and the
corresponding interior homogeneous constants. Equation
\eqref{eq:shell_barotropic_closure} supplies one material equation in
each sector. Regularity at the black-hole horizon and asymptotic
flatness remove the remaining inadmissible homogeneous modes.

The two expansion parameters remain independent in this calculation.
In particular,
$J=O(\epsilon_M\epsilon_\Omega)$, so that terms quadratic in the
dragging field are $O(\epsilon_M^2\epsilon_\Omega^2)$ and do not enter
the retained $(1,2)$ vacuum equations.  The first-order dragging field
is nevertheless retained when the Israel tensor is transformed to
the comoving fluid frame.  Thus the calculation below is the
coefficient of $\epsilon_M\epsilon_\Omega^2$ in a bivariate expansion,
not a one-parameter Hartle calculation in which $\epsilon_M$ and
$\epsilon_\Omega$ are assigned the same size.

\subsubsection{Fixed-rest-mass and Euler closures}

A local barotropic equation of state does not by itself select the
monopole zero mode.  We therefore hold fixed the shell rest number
\begin{equation}
N=\int n u^t\sqrt{-q}\,d\theta d\phi ,
\qquad
\frac{\delta n}{n_0}
=\frac{\delta\sigma}{\sigma_0+p_0},
\label{eq:shell_rest_number}
\end{equation}
where the second relation is the isentropic surface-fluid first law.
To display the powers of angular velocity explicitly, write
$\Omega_{\mathrm{s}}=\eta^{1/2}\bar\Omega$ and
$r_\Sigma=R+\eta(\xi_0+\xi_2P_2)$.  Expansion of
Eq.~\eqref{eq:shell_rest_number} gives
\begin{equation}
\delta\sigma_0=-(\sigma_0+p_0)
\left(\frac{2\xi_0}{R}
+\frac{R^2\bar\Omega^2}{3f(R)}\right),
\qquad
\delta p_0=c_{\mathrm{s}}^2\delta\sigma_0 .
\label{eq:fixed_rest_mass_closure}
\end{equation}
The tangential projection of $D_aS^{ab}=0$ supplies the independent
quadrupole relation.  For rigid rotation
$D_Ap=(\sigma+p)D_A\ln u^t$, and hence
\begin{equation}
\delta p_2=-(\sigma_0+p_0)
\left(\frac{M\xi_2}{R^2f(R)}
+\frac{R^2\bar\Omega^2}{3f(R)}\right),
\qquad
\delta\sigma_2=\frac{\delta p_2}{c_{\mathrm{s}}^2}.
\label{eq:quadrupole_euler_closure}
\end{equation}
Equations~\eqref{eq:fixed_rest_mass_closure} and
\eqref{eq:quadrupole_euler_closure} remove, respectively, the global
monopole ambiguity and the local quadrupolar material ambiguity.

\subsubsection{Vacuum basis and junction systems}

In Regge--Wheeler gauge the static even-parity vacuum equations have
$H_0=H_2\equiv H$ and $H_1=0$.  With $x=r/M-1$, the $\ell=2$ master
equation is
\begin{equation}
(x^2-1)H''+2xH'
-\left(6+\frac{4}{x^2-1}\right)H=0.
\label{eq:quadratic_l2_master}
\end{equation}
We use the normalized independent solutions
\begin{align}
\widehat P_2^{\,2}(x)&=x^2-1,\\
\widehat Q_2^{\,2}(x)&=\frac{5}{8}\left[
\frac{3}{2}(x^2-1)\ln\left(\frac{x+1}{x-1}\right)
-3x+\frac{2x}{x^2-1}\right].
\label{eq:quadratic_l2_basis}
\end{align}
The first is horizon regular and the second is asymptotically
decaying, with $\widehat Q_2^{\,2}=x^{-3}+O(x^{-5})$.  Therefore
\begin{equation}
H_2^-=A_2\widehat P_2^{\,2}(x_-),\qquad
H_2^+=B_2\widehat Q_2^{\,2}(x_+),
\end{equation}
while
\begin{equation}
K_2=\frac{M}{2}\frac{dH_2}{dr}
+\frac{r^2-Mr-M^2}{r(r-2M)}H_2 .
\label{eq:quadratic_K_from_H}
\end{equation}
Direct substitution verifies
Eqs.~\eqref{eq:quadratic_l2_master}--\eqref{eq:quadratic_K_from_H}.

Harmonic projection of the induced-metric and Israel conditions gives
two square linear systems:
\begin{align}
\ell=0:\quad&
(\delta M_{\mathrm{rot}},C_0,\xi_0^+,\xi_0^-),&
&[q_{\theta\theta}]_0,\ [q_{tt}]_0,\
\delta p_0-c_{\mathrm{s}}^2\delta\sigma_0,\ \delta N,
\nonumber\\
\ell=2:\quad&
(B_2,A_2,\xi_2^+,\xi_2^-),&
&[q_{\theta\theta}]_2,\ [q_{tt}]_2,\
\Pi_2,\ \delta p_2-c_{\mathrm{s}}^2\delta\sigma_2 .
\label{eq:quadratic_two_systems}
\end{align}
Here $C_0$ is the horizon-regular interior time renormalization and
$\Pi_2$ is the traceless spatial eigenvalue of the Israel surface
tensor; $\Pi_2=0$ enforces perfect-fluid isotropy.  Each line of
Eq.~\eqref{eq:quadratic_two_systems} is a $4\times4$ problem.
For completeness, we now state the matrix construction used to solve
these systems.  Introduce the vectors
\begin{equation}
 \boldsymbol{x}_0=(\delta M_{\mathrm{rot}},C_0,\xi_0^+,\xi_0^-)^{\mathrm{T}},
 \qquad
 \boldsymbol{x}_2=(B_2,A_2,\xi_2^+,\xi_2^-)^{\mathrm{T}},
\label{eq:quadratic_unknown_vectors}
\end{equation}
and, after extracting the coefficient of
$\eta=\Omega_{\mathrm{s}}^2/\bar\Omega^2$, define the residual vectors
\begin{align}
 {\boldsymbol{\cal R}}_0(\boldsymbol{x}_0)
 &=
 \begin{pmatrix}
 {\cal P}_0[q_{tt}^+-q_{tt}^-]\\
 {\cal P}_0[q_{\theta\theta}^+-q_{\theta\theta}^-]\\
 \delta p_0-c_{\mathrm{s}}^2\delta\sigma_0\\
 \delta\sigma_0+(\sigma_0+p_0)
 [2\xi_0^+/R+R^2\bar\Omega^2/(3f_+)]
 \end{pmatrix},\label{eq:monopole_residual_vector}\\
 {\boldsymbol{\cal R}}_2(\boldsymbol{x}_2)
 &=
 \begin{pmatrix}
 {\cal P}_2[q_{tt}^+-q_{tt}^-]\\
 {\cal P}_2[q_{\theta\theta}^+-q_{\theta\theta}^-]\\
 \Pi_2\\
 \delta p_2-c_{\mathrm{s}}^2\delta\sigma_2
 \end{pmatrix}.
\label{eq:quadrupole_residual_vector}
\end{align}
Here $f_+=1-2M_+/R$, and
\begin{equation}
 {\cal P}_\ell[X]\equiv
 \frac{2\ell+1}{2}\int_{-1}^{1}
 X(\arccos\mu)P_\ell(\mu)\,d\mu .
\label{eq:harmonic_projection_operator}
\end{equation}
The density, pressure, and anisotropy in
Eqs.~\eqref{eq:monopole_residual_vector} and
\eqref{eq:quadrupole_residual_vector} are the eigenvalues of the
surface tensor obtained directly from Eq.~\eqref{eq:second_order_junction_system}.
Continuity of the angular induced metric makes
$\xi_\ell^+=\xi_\ell^-$ in the solution; retaining the two embeddings
as separate unknowns provides an independent check of that condition.

At fixed $(M,R,M_{\mathrm{s}})$ both residual vectors are affine in their
respective unknowns.  The matrices and rotational source vectors are
therefore defined without fitted coefficients by
\begin{equation}
 ({\cal A}_\ell)_{ij}
 =\left.
 \frac{\partial({\cal R}_\ell)_i}{\partial(\boldsymbol{x}_\ell)_j}
 \right|_{\boldsymbol{x}_\ell=0},
 \qquad
 (\boldsymbol{b}_\ell)_i=({\cal R}_\ell)_i(\boldsymbol{0}),
 \qquad
 {\cal A}_\ell\boldsymbol{x}_\ell=-\boldsymbol{b}_\ell ,
 \quad \ell=0,2 .
\label{eq:quadratic_matrix_definition}
\end{equation}
Equations~\eqref{eq:second_order_junction_system},
\eqref{eq:quadratic_unknown_vectors}--\eqref{eq:quadratic_matrix_definition},
together with the vacuum basis
\eqref{eq:quadratic_l2_basis}--\eqref{eq:quadratic_K_from_H}, uniquely
specify every matrix element and source component.  They also give a
direct reproducibility prescription: evaluate the induced metric and
extrinsic curvature on the displaced surface, form the Israel tensor,
extract the coefficient even in the rotation parameter, project it
with Eq.~\eqref{eq:harmonic_projection_operator}, and solve the two
linear systems.  The reported displacement coefficients are
$\xi_\ell=(\xi_\ell^++\xi_\ell^-)/2$; their half-differences provide
numerical continuity diagnostics.

The geometry was also assembled independently from the embedded
tangent vectors, normalized normal, Christoffel symbols, and
$K_{ab}=-n_{\alpha;\beta}e^\alpha{}_ae^\beta{}_b$.  The resulting
Israel tensor was diagonalized in the $(t,\phi)$ block before its
$\ell=0,2$ projection.  This direct construction reproduces the
static Israel density and pressure and avoids importing unverified
second-order jump formulas.

\subsubsection{Quadratic observables and convergence}

For reporting the numerical sequence it is convenient to introduce
\begin{equation}
\chi\equiv\frac{\Omega_{\mathrm{s}}}{\Omega_K},
\qquad \Omega_K=\sqrt{\frac{M}{R^3}},
\qquad
\chi=\epsilon_\Omega\sqrt{\frac{f(R)R}{M}} .
\label{eq:chi_epsilon_relation}
\end{equation}
This change of normalization does not identify the two perturbative
parameters.  The coefficients in Table~\ref{tab:quadratic_shell_results}
are defined by
\begin{align}
M_{\mathrm{ADM}}&=M+\delta M+\chi^2\delta M_{\mathrm{rot}}
+O(\chi^4),\\
r_\Sigma&=R+\chi^2[\xi_0+\xi_2P_2(\cos\theta)]+O(\chi^4),\\
Q_{\mathrm{ADM}}&=\chi^2Q+O(\chi^4),\quad \mbox{with} \quad 
g_{tt}=-1+\frac{2M_{\mathrm{ADM}}}{r}
+\frac{2Q_{\mathrm{ADM}}}{r^3}P_2(\cos\theta)+O(r^{-4}),
\quad
Q=\frac12B_2M_+^3 .
\label{eq:quadrupole_convention}
\end{align}

\begin{table*}
\caption{\label{tab:quadratic_shell_results}Quadratic coefficients for
$M_{\mathrm{s}}/M=10^{-2}$.  The residual is the largest absolute residual
among the eight junction and closure equations.}
\begin{ruledtabular}
\begin{tabular}{cccccc}
$R/M$ & $\delta M_{\mathrm{rot}}/M$ & $\xi_0/M$ & $\xi_2/M$
& $Q/M^3$ & maximum residual\\
\hline
$4$ & $1.445704764\times10^{-3}$ & $1.742065416$
& $0.484351235$ & $-1.718257582\times10^{-2}$ & $3.7\times10^{-13}$\\
$6$ & $7.581712083\times10^{-4}$ & $18.08641391$
& $1.040841804$ & $-1.628874399\times10^{-2}$ & $5.4\times10^{-12}$\\
$8$ & $5.178550165\times10^{-4}$ & $50.05353752$
& $1.580060605$ & $-1.217585951\times10^{-2}$ & $1.5\times10^{-11}$
\end{tabular}
\end{ruledtabular}
\end{table*}

The systems were evaluated with 40--60 decimal digits because the
light-shell quadrupole jump suffers a strong cancellation in ordinary
double precision.  Every point was repeated after doubling the
spin-extraction step and changing the working precision.  The
observables agree to better than $10^{-8}$ in relative value; the
largest residual among the independent repetitions is
$6.0\times10^{-11}$.  At the illustrative value $\chi=0.05$, the
largest fractional monopole displacement in the table is
$\chi^2\xi_0/R=1.56\times10^{-2}$ and the largest quadrupole
displacement is below $4.95\times10^{-4}$.  The positive
$\delta M_{\mathrm{rot}}$ is the rotational-energy correction, while the
negative $Q$ is an oblate asymptotic response in the convention
\eqref{eq:quadrupole_convention}.

Thus Eqs.~\eqref{eq:fixed_rest_mass_closure}--\eqref{eq:quadratic_two_systems}
provide the systematic higher-order calculation requested by the
perturbative formulation: the $(1,2)$ fields and observables follow
from a pair of explicitly closed linear boundary-value problems
rather than from an assumption that $\epsilon_M$ and
$\epsilon_\Omega$ have comparable magnitude.

\section{Application II: A Conserved Perfect-Fluid Torus on Kerr}
\label{sec:torus}

The shell calculations provide distributional tests, but they do not
represent an accretion structure. We therefore consider a second,
smooth and astrophysically motivated source: a stationary,
axisymmetric perfect-fluid torus on the Kerr background. The torus is
treated in the test-fluid regime, so its stress tensor is first order
in a separate small mass ratio $\eta_{\mathrm{t}}\ll1$ and does not
alter the equilibrium Kerr geometry at zeroth order.

The stress tensor is
\begin{equation}
T_{\mu\nu}^{\mathrm{tor}}
=(e+p)u_\mu u_\nu+p\,g_{\mu\nu}^{(0)},
\label{eq:torus_stress}
\end{equation}
where $e$ and $p$ are the total energy density and pressure. Circular
stationary motion has
\begin{equation}
u^\mu=u^t(1,0,0,\Omega),
\qquad
u^t=
\left[-g_{tt}-2\Omega g_{t\phi}
-\Omega^2g_{\phi\phi}\right]^{-1/2}.
\label{eq:torus_velocity}
\end{equation}
Define the specific angular momentum
\begin{equation}
\ell\equiv-\frac{u_\phi}{u_t}.
\label{eq:torus_ell}
\end{equation}
For constant $\ell$, the angular velocity and covariant energy are
\begin{align}
\Omega
=-\frac{g_{t\phi}+\ell g_{tt}}
{g_{\phi\phi}+\ell g_{t\phi}},\qquad \qquad 
|u_t|
=\left[
\frac{g_{t\phi}^2-g_{tt}g_{\phi\phi}}
{g_{\phi\phi}+2\ell g_{t\phi}+\ell^2g_{tt}}
\right]^{1/2}.
\label{eq:torus_ut}
\end{align}
The relativistic Euler equation,
$\nabla_\mu T^{\mu\nu}_{\mathrm{tor}}=0$, has the first integral
\begin{equation}
H(p)+\ln|u_t|=\mathrm{constant},
\qquad
H(p)\equiv\int_0^p\frac{dp'}{e(p')+p'} .
\label{eq:torus_euler_integral}
\end{equation}
Choosing the constant at the zero-pressure inner edge gives
\begin{equation}
H(r,\theta)
=\ln\frac{|u_{t,\mathrm{in}}|}{|u_t(r,\theta)|}.
\label{eq:torus_enthalpy_potential}
\end{equation}
The fluid occupies the region $H>0$; outside it we set $p=e=0$.
Thus its boundary is an equipotential and no distributional supporting
force is required.

For a polytropic equation of state
\begin{equation}
p=K\rho_0^\Gamma,
\qquad
e=\rho_0+\frac{p}{\Gamma-1},
\label{eq:torus_polytrope}
\end{equation}
In geometrized units, $[\rho_0]=[p]=M^{-2}$, and hence
\begin{equation}
[K]=M^{2(\Gamma-1)}.
\label{eq:torus_K_dimension}
\end{equation}
For the choice $\Gamma=4/3$ used below, the dimensionless
polytropic parameter is therefore $K/M^{2/3}$. The specific enthalpy
$h=(e+p)/\rho_0$ satisfies $\ln h=H$. Hence
\begin{align}
\rho_0(r,\theta)
=\left[
\frac{\Gamma-1}{\Gamma K}
\left(e^{H(r,\theta)}-1\right)
\right]^{1/(\Gamma-1)},\qquad 
p(r,\theta)=K\rho_0^\Gamma,\qquad
e(r,\theta)=\rho_0+\frac{p}{\Gamma-1}.
\label{eq:torus_pressure_density}
\end{align}
Equations~\eqref{eq:torus_ut}--\eqref{eq:torus_pressure_density}
therefore determine an explicit conserved matter distribution once
$(a,\ell,r_{\mathrm{in}},K,\Gamma)$ are specified
\cite{Fishbone:1976,Abramowicz:1978}.

As a definite illustration, set $M=1$, $a=0.5$,
$r_{\mathrm{in}}=6$, and place the pressure maximum at
$r_{\mathrm{max}}=10$ on the equatorial plane. At a pressure maximum
the fluid angular momentum equals the prograde circular-geodesic
value,
\begin{equation}
\ell_{\mathrm{K}}(r)
=\frac{r^2-2a\sqrt{r}+a^2}
{\sqrt{r}(r-2)+a},
\label{eq:kerr_ell_kepler}
\end{equation}
which gives
\begin{equation}
\ell=3.76334947 .
\label{eq:torus_example_ell}
\end{equation}
Evaluation of Eq.~\eqref{eq:torus_enthalpy_potential} gives
\begin{equation}
H_{\mathrm{max}}=0.01951664,
\qquad
r_{\mathrm{out}}=28.81312
\label{eq:torus_example_values}
\end{equation}
on the equatorial plane, where $r_{\mathrm{out}}$ is the second root
of $H(r,\pi/2)=0$. The polytropic constant must be fixed by a global,
rather than local, normalization.  Define
\begin{align}
E_{\mathrm{t}}=-\int T^t{}_{t}\sqrt{-g}\,
dr\,d\theta\,d\phi,\qquad \qquad 
J_{\mathrm{t}}=\int T^t{}_{\phi}\sqrt{-g}\,
dr\,d\theta\,d\phi .\label{eq:torus_global_angular_momentum}
\end{align}
For $\Gamma=4/3$, the fixed-potential solution scales as
$e, p, \rho_0\propto K^{-3}$. Direct two-dimensional Gauss--Legendre
integration shows that $K/M^{2/3}=10^{-2}$ would give
$E_{\mathrm{t}}/M=486.0360$ and is therefore incompatible with the
test-fluid assumption. We instead impose
$\eta_{\mathrm{t}}\equiv E_{\mathrm{t}}/M=10^{-2}$, which gives
\begin{equation}
\frac{K}{M^{2/3}}=0.36494114,\qquad
\frac{M_{0,\mathrm{t}}}{M}=0.01031430,\qquad
\frac{J_{\mathrm{t}}}{M^2}=0.03775170 ,
\label{eq:torus_global_normalization}
\end{equation}
where $M_{0,\mathrm{t}}=\int\rho_0u^t\sqrt{-g}\,d^3x$ is the
conserved baryonic rest mass. Increasing the quadrature order from
240 to 480 changes these quantities by less than $2\times10^{-8}$ in
relative value. This global normalization is essential: local source
values alone do not establish that the perturbation is small.
We next calculate the Ricci sources entering the traceful hierarchy.
Introduce the tetrad contractions
\begin{align}
U_n \equiv u_\mu n^\mu
=\frac{(r^2+a^2)u_t+a u_\phi}{2\Sigma},\qquad 
U_m\equiv u_\mu m^\mu
=\frac{i\left(a\sin\theta\,u_t
+\csc\theta\,u_\phi\right)}
{\sqrt{2}\left(r+i a\cos\theta\right)} .
\label{eq:torus_Um}
\end{align}
Using the trace-reversed linearized Einstein equation,
\begin{equation}
R_{\mu\nu}^{(1)}
=
8\pi\left(
T_{\mu\nu}^{\mathrm{tor}}
-\frac{1}{2}g_{\mu\nu}^{(0)}T^{\mathrm{tor}}
\right),
\label{eq:torus_trace_reversed_einstein}
\end{equation}
and noting that $n^\mu$ and $m^\mu$ are null and mutually
orthogonal, all terms proportional to $g_{\mu\nu}^{(0)}$ vanish
in the contractions below. Direct substitution therefore yields
\begin{align}
\Phi_{22}^{\mathrm{tor}}
=4\pi(e+p)U_n^2,\qquad 
\Phi_{21}^{\mathrm{tor}}
=4\pi(e+p)U_mU_n,\qquad 
\Phi_{20}^{\mathrm{tor}}
=4\pi(e+p)U_m^2.
\label{eq:torus_phi20}
\end{align}
The Ricci scalar contribution is
\begin{equation}
\Lambda_{\mathrm{NP}}^{\mathrm{tor}}
=\frac{R^{(1)}}{24}
=\frac{\pi}{3}(e-3p).
\label{eq:torus_lambda_np}
\end{equation}
For the globally normalized parameters $\Gamma=4/3$,
$K/M^{2/3}=0.36494114$, and $E_{\mathrm{t}}/M=10^{-2}$, the sources at the
pressure maximum are
\begin{align}
\Phi_{20}^{\mathrm{tor}}&=-1.5274833\times10^{-6},\qquad
\Phi_{21}^{\mathrm{tor}}=-3.2557659\times10^{-6}\,i,\nonumber\\
\Phi_{22}^{\mathrm{tor}}&=6.9395269\times10^{-6},\qquad
\Lambda_{\mathrm{NP}}^{\mathrm{tor}}=2.5770911\times10^{-6}.
\label{eq:torus_example_np}
\end{align}
All four quantities scale linearly with $\eta_{\mathrm{t}}$ at fixed
torus geometry. They are quoted to make the source calculation 
reproducible, not as an observational fit.
Equations~\eqref{eq:torus_velocity}--\eqref{eq:torus_lambda_np}
are explicit smooth source functions on the two-dimensional torus
domain. In particular, the trace branch is driven by
\begin{equation}
4\pi(e+p)U_n^2
\longrightarrow\mu^{(1)}
\longrightarrow h_{m\bar m}^{(1)}
\longrightarrow h^{(1)} .
\label{eq:torus_trace_chain}
\end{equation}
Unlike the spherical benchmark, however,
$\Phi_{20}^{\mathrm{tor}}$ and $\Phi_{21}^{\mathrm{tor}}$ are generally
nonzero. Consequently, reconstructing the complete torus-induced
metric requires their simultaneous inclusion in the remaining
Newman--Penrose equations. The present calculation supplies a
conserved physical source and the Ricci projections entering the
particular traceful outgoing transport chains considered here;
solving the complete coupled radial--angular boundary-value problem
remains numerical.

\section{Requirements for a General Sourced Kerr Reconstruction}\label{VI}

The reconstruction formalism developed in the previous sections was
presented for stationary, axisymmetric perturbations in order to expose
the structure of the traceful radiation gauge in its simplest setting.
Generic perturbations of Kerr may be generated by sources with
arbitrary spacetime dependence. Extending the shell calculation to
such sources introduces additional Ricci sectors and a coupled set of
Newman--Penrose equations.

We now state the central formulation result.  Here ``closed'' means
that the unknown metric, connection, and curvature amplitudes are
matched by a complete set of differential equations and independent
boundary, matching, gauge, and completion data.  Existence,
uniqueness, and regularity are then determined by the analytic
properties of the resulting operator and the source.

We first isolate the result that follows directly from the triangular
transport character of the reconstruction.  Let $\mathcal D$ be a
regular open Kerr domain on which the Kinnersley tetrad is nonsingular,
and let $\mathcal S\subset\mathcal D$ be a smooth hypersurface
transverse to $n^\mu$.  A flow parameter $s$ is chosen so that
$n^\mu\nabla_\mu s\neq0$.  After multiplication by a smooth nonzero
factor, every step of the hierarchy can locally be written as
\begin{equation}
\frac{d u_j}{ds}+A_j(s,y)u_j
=F_j\bigl(s,y;u_1,\ldots,u_{j-1},T^{(1)},\Psi_4^{(1)}\bigr),
\label{eq:triangular_transport_normal_form}
\end{equation}
where $y$ labels the generators and any transverse derivatives in
$F_j$ act only on source data or quantities already reconstructed at
an earlier step.

\begin{theorem}[Local existence and uniqueness of the traceful hierarchy]
\label{thm:local_reconstruction}
Assume that (i) $g^{(0)}_{\mu\nu}$ is a smooth subextremal Kerr metric
on $\mathcal D$; (ii) $T^{(1)}_{\mu\nu}$ and a solution
$\Psi_4^{(1)}$ of the sourced Teukolsky equation are smooth on
$\mathcal D$; (iii)
$\nabla^{(0)}_\mu T^{(1)\mu\nu}=0$ and the Newman--Penrose source and
initial data satisfy the linearized Bianchi, Ricci, and tetrad
compatibility relations on $\mathcal S$; and (iv) one smooth initial
value on $\mathcal S$ is prescribed for every transport integration
in the hierarchy.  Then, in a sufficiently small flow neighbourhood
$\mathcal U$ of $\mathcal S$, the traceful outgoing-radiation-gauge
hierarchy has a smooth solution for all reconstructed curvature,
connection, and metric components.  This solution is unique modulo
residual transformations preserving
$n^\mu h^{(1)}_{\mu\nu}=0$ and the finite-dimensional nonradiative
completion kernel.  Once the residual gauge and all completion
amplitudes, including $\delta M$ and $\delta J$, are fixed, the local
reconstructed metric is unique.
\end{theorem}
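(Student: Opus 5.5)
The proof proceeds by induction along the triangular normal form \eqref{eq:triangular_transport_normal_form}, treating each step as a linear ODE along the integral curves of $n^\mu$.

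\textbf{Step 1: straightening the flow.} Since $\mathcal S$ is transverse to $n^\mu$ and the Kinnersley tetrad is smooth and nonsingular on $\mathcal D$, the flow-box theorem supplies local coordinates $(s,y)$ on a neighbourhood $\mathcal U$ of $\mathcal S$. In these coordinates $n^\mu\nabla_\mu=\chi(s,y)\,\partial_s$ with $\chi\neq0$, and $\mathcal S=\{s=0\}$.

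\textbf{Step 2: ordering the unknowns.} I would list the unknowns in the order of the hierarchies \eqref{eq:curvature_hierarchy} and \eqref{eq:tracehierarchy}:
\[
\lambda^{(1)},\ h^{(1)}_{\bar m\bar m},\ \Psi_3^{(1)},\ \pi^{(1)},\ h^{(1)}_{l\bar m},\ \mu^{(1)},\ h^{(1)}_{m\bar m},\ \Psi_2^{(1)},\ h^{(1)}_{ll},
\]
together with the remaining spin coefficients. For each one I would check, from the linearized Bianchi and Ricci identities in the tetrad gauge \eqref{eq:tetrad_gauge}, that its $\Delta$-equation has the form \eqref{eq:triangular_transport_normal_form}.

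\textbf{Step 3: inductive solution.} Each $A_j$ is built from smooth background spin coefficients such as $\mu,\gamma$, multiplied by $\chi^{-1}$, so it is smooth. By induction, each $F_j$ is smooth: it depends on $T^{(1)}$, on $\Psi_4^{(1)}$, and on previously constructed $u_k$, together with their transverse derivatives. With the initial value from (iv), step $j$ is a scalar linear ODE in $s$ with parameters $y$, whose solution is
\[
u_j(s,y)=e^{-\int_0^sA_j}\Bigl[u_j(0,y)+\int_0^s e^{\int_0^{s'}A_j}F_j\,ds'\Bigr].
\]
This exists and is unique, and smooth dependence on $y$ follows from differentiating under the integral. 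Because the system is linear, there is no blow-up. Shrinking $\mathcal U$ is needed only so that the flow box stays inside $\mathcal D$ and away from the zeros of $\Delta_K$ and $\Sigma$.

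\textbf{Step 4: consistency (the main obstacle).} The hierarchy uses only the $\Delta$-transport equations. The remaining linearized NP equations, meaning the $D$, $\delta$ and $\bar\delta$ Ricci and Bianchi identities, the commutators, and the trace-reversed Einstein equations, are not imposed along the flow. I would show that the constraint residuals $C$, defined as the differences between the two sides of these omitted equations, themselves satisfy a homogeneous linear transport system
\[
\Delta C=\mathcal B\,C .
\]
This propagation is obtained by applying the NP commutators to the solved equations and using the linearized Bianchi identity, with $\nabla^{(0)}_\mu T^{(1)\mu\nu}=0$ from (iii) cancelling the source terms and the Teukolsky equation for $\Psi_4^{(1)}$ from (ii) cancelling the terms that involve $\Psi_4^{(1)}$. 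Assumption (iii) gives $C=0$ on $\mathcal S$, so uniqueness for linear ODEs forces $C\equiv0$ on $\mathcal U$.

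This is the hardest step because the commutator bookkeeping is long. I would first try to reuse the known propagation structure of the linearized NP system on type-D backgrounds (the Geroch--Held--Penrose form of the equations), then add the traceful terms $\mu^{(1)}$, $h^{(1)}_{m\bar m}$ and the Ricci scalars $\Phi$. With the constraints propagated, the reconstructed $h^{(1)}_{\mu\nu}$ solves the linearized Einstein equation sourced by $T^{(1)}$.

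\textbf{Step 5: uniqueness.} Take two solutions with the same $T^{(1)}$ and $\Psi_4^{(1)}$. Their difference is a vacuum perturbation in the gauge $n^\mu h_{\mu\nu}=0$ with $\Psi_4^{(1)}=0$. The only freedom left in it is the choice of initial values on $\mathcal S$. Imposing the constraints on these initial values leaves exactly two kinds of data. The first is initial data of residual gauge vectors $\xi^\mu$ preserving the radiation gauge; these satisfy transport equations of their own and give pure-gauge differences. The second is the finite-dimensional algebraically special / nonradiative sector of Wald's theorem, namely $\delta M$, $\delta J$ and the related Kerr-parameter and C-metric-type modes, which forms the completion kernel.

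Once the residual-gauge initial data and the completion amplitudes are fixed, the difference has zero data on $\mathcal S$. Uniqueness at each step of Step 3 then gives zero difference, so the local reconstructed metric is unique.
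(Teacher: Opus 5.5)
Your Steps 1--4 follow the paper's own proof, with more detail: flow box for $n^\mu$, the integrating-factor formula, induction over the triangular hierarchy, and propagation of the unused equations through a homogeneous subsidiary system.

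\textbf{The gap is Step 5.} It cannot be filled, because the finite-dimensionality it asserts fails on a small flow neighbourhood $\mathcal U$.
\begin{itemize}
\item Wald's theorem is global. Its hypotheses (regularity and fall-off on the whole exterior) are what rule out the algebraically special perturbations before the reduction to the linearized Kerr/NUT/C-metric family. So those perturbations are \emph{not} part of the finite-dimensional sector you describe.
\item Locally nothing rules them out. Vacuum algebraically special perturbations with $\Psi_4^{(1)}=0$ but $\Psi_0^{(1)}\neq0$ exist. Already for $a=0$ there is one for each $\ell\geq2$ and $m$, at $M|\omega|=(\ell-1)\ell(\ell+1)(\ell+2)/12$, where the Teukolsky--Starobinsky constant vanishes.
\item These perturbations are real-analytic wherever the Kinnersley tetrad is regular in $r>r_+$, and they can be put in the outgoing radiation gauge.
\item They are neither pure gauge nor linearized type-D perturbations. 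The reason is that $\Psi_0^{(1)}$ is gauge and tetrad invariant on a type-D background, and it vanishes for the type-D family.
\end{itemize}
Adding any finite combination of them to a solution gives another solution of the hierarchy. It has the same $T^{(1)}$ and $\Psi_4^{(1)}$, and its data on $\mathcal S$ satisfy all compatibility relations.

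Hence the compatible homogeneous data modulo residual gauge form an infinite-dimensional space. Your ``exactly two kinds of data'' is therefore false. No finite set of completion amplitudes, $\delta M$ and $\delta J$ included, restores uniqueness.

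What Steps 3--4 actually give is uniqueness once \emph{all} transport data on $\mathcal S$ are fixed. A finite-dimensional kernel needs further input. Examples are the global horizon and asymptotic conditions of Proposition~\ref{prop:kerr_closure}, or additional gauge-invariant data such as $\Psi_0^{(1)}$. The paper's proof has the same hole: it simply asserts that the homogeneous difference ``is precisely the residual-gauge and completion kernel.'' Invoking Wald does not repair it.

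\textbf{Two smaller points.}
\begin{itemize}
\item Your ordering in Step 2 is not triangular when $a\neq0$. The Ricci identity $D\nu-\Delta\pi=\cdots$ that transports $\pi^{(1)}$ contains $(\pi+\bar\tau)\mu^{(1)}$ and $(\gamma^{(1)}-\bar\gamma^{(1)})\pi$. By Eq.~\eqref{eq:mu_gamma_metric}, $\gamma^{(1)}\propto h^{(1)}_{m\bar m}$. So $\mu^{(1)}$ and $h^{(1)}_{m\bar m}$ must come before $\pi^{(1)}$, as in the paper's ordering.
\item In Step 4, ODE uniqueness requires $\mathcal B$ to be purely algebraic, or the residual system to be triangular. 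If the commutators produce $D$, $\delta$, or $\bar\delta$ derivatives of residuals, you need a PDE uniqueness argument instead. Holmgren's theorem is one option, since Kerr is real-analytic, provided $\mathcal S$ is non-characteristic for the subsidiary system. The paper does not verify this structure either.
\end{itemize}
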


\begin{proof}
In a flow box of the nonvanishing vector field $n^\mu$, the first
transport equation has the integrating-factor solution
\begin{equation}
u_1(s,y)=e^{-I_1(s,y)}
\left[u_1(s_0,y)+\int_{s_0}^{s}e^{I_1(\sigma,y)}
F_1(\sigma,y)\,d\sigma\right],
\qquad
I_1(s,y)=\int_{s_0}^{s}A_1(\tau,y)\,d\tau .
\label{eq:transport_integrating_factor}
\end{equation}
Smooth coefficients, sources, and initial data therefore give one and
only one smooth $u_1$ locally.  Substitution of this solution into the
second equation gives the same conclusion for $u_2$.  Induction over
the finite triangular hierarchy yields local existence and uniqueness
for $\mu^{(1)}$, $h^{(1)}_{m\bar m}$, $\lambda^{(1)}$,
$h^{(1)}_{\bar m\bar m}$, the intermediate Weyl and spin-coefficient
variables, and finally $h^{(1)}_{l\bar m}$ and $h^{(1)}_{ll}$.
Compatibility of the unused Newman--Penrose equations follows locally
from the assumed linearized identities and stress-energy conservation;
equivalently, their constraint differences obey the associated
homogeneous subsidiary identities and vanish when their data vanish on
$\mathcal S$.  Two solutions with identical transport data consequently
differ only by a homogeneous perturbation invisible to the prescribed
radiative and Ricci data.  This is precisely the residual-gauge and
completion kernel.  Fixing that kernel proves uniqueness of the metric
itself.
\end{proof}

The theorem is deliberately local.  It does not imply that the flow
box extends through the complete domain of outer communication, that
the radiation gauge remains regular at a distributional source, or
that simultaneous horizon and asymptotic conditions select a global
solution.  Those questions require global energy or resolvent
estimates and a complete characterization of the zero-frequency
kernel.

\begin{proposition}[Conditional global closure criterion]
\label{prop:kerr_closure}
Let $g^{(0)}_{\mu\nu}$ be a subextremal Kerr metric equipped with a
principal Kinnersley tetrad, and let
$h^{(1)}_{\mu\nu}$ obey the traceful outgoing radiation gauge
$n^\mu h^{(1)}_{\mu\nu}=0$ on a domain whose source interfaces are
specified.  The first-order global reconstruction problem is closed as
a mathematical boundary-value problem, mode by
mode in $(m,\omega)$ and as a coupled system in the angular index, if
the following data and assumptions are supplied:
\begin{enumerate}
\item a prescribed stress tensor $T^{(1)}_{\mu\nu}$ of stated support
and regularity, satisfying
$\nabla^{(0)}_\mu T^{(1)\mu\nu}=0$ distributionally;
\item the complete set of Newman--Penrose Ricci projections (including
complex conjugates and $\Lambda^{(1)}_{\mathrm {NP}}$) obtained from the
linearized Einstein equation, not only $\Phi^{(1)}_{22}$;
\item the radiative curvature data, equivalently a solution of the
sourced Teukolsky equation for the chosen ingoing or outgoing
reconstruction, together with the compatible linearized Bianchi,
Ricci, tetrad, and Einstein equations needed for every surviving metric
component;
\item boundary conditions at the future horizon and at infinity (or at
finite boundaries), plus continuity and jump conditions across every
material interface, such that all integrations of the transport system
are fixed;
\item the non-radiative completion data, in particular the mass and
angular-momentum perturbations and any further kernel modes allowed by
the domain, together with residual-gauge conditions that eliminate or
normalize the homogeneous gauge kernel; and
\item for a distributional source, a prescription specifying the weak
solution and its one-sided fields; a self-force observable additionally
requires a singular--regular decomposition, which is separate from
closure of the retarded reconstruction problem.
\item function spaces encoding horizon regularity and asymptotic decay,
together with an a priori estimate for the gauge-fixed coupled operator
and a characterization of its zero-frequency kernel.  A global
existence-and-uniqueness theorem additionally requires the resulting
operator to be surjective and injective after the gauge and completion
kernel has been removed.
\end{enumerate}
Under these assumptions the trace equation
$h^{(1)}=2h^{(1)}_{m\bar m}$ and the $\Phi^{(1)}_{22}$ transport branch
close only the trace subsector, while the remaining Ricci projections,
Weyl sector, and completion data close the other surviving components.
If any one of items 1--7 is omitted, a generic sourced reconstruction is
underdetermined or may be incompatible.
\end{proposition}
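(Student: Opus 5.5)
The plan is to treat Proposition~\ref{prop:kerr_closure} as a counting-and-compatibility statement: closure holds if, after mode decomposition, the unknowns and the independent conditions match. The necessity claim will be shown by exhibiting, for each omitted item, either a free homogeneous solution or an obstruction. I will work mode by mode in $(m,\omega)$, where stationarity and axisymmetry of Kerr reduce every directional derivative along $n^\mu$ to a radial transport operator. For fixed $(m,\omega)$ the angular dependence enters through the $\theta$-dependence of $\Sigma$ and $r-ia\cos\theta$, so the spin-weighted harmonics couple in the angular index. I will therefore regard each reduced equation as a first-order radial ODE system for a vector of angular coefficients.

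The first step lists the surviving unknowns: the six tetrad components $h_{ll},h_{lm},h_{l\bar m},h_{mm},h_{m\bar m},h_{\bar m\bar m}$, the perturbed spin coefficients, and $\Psi_0^{(1)},\dots,\Psi_4^{(1)}$. I will then assign to each the equation in the hierarchy of Sec.~\ref{IV} that determines it.
\begin{itemize}
\item The $\Phi_{22}^{(1)}$ branch, Eqs.~\eqref{eq:mutransport_reduced} and \eqref{eq:trace_transport}, fixes $\mu^{(1)}$ and $h^{(1)}_{m\bar m}$, and hence $h^{(1)}=2h^{(1)}_{m\bar m}$.
\item The $\Psi_4^{(1)}$ and $\Psi_3^{(1)}$ chains fix $h_{\bar m\bar m}$ and $h_{l\bar m}$. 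Their sources contain $\Phi_{21}^{(1)}$, $\Phi_{20}^{(1)}$ and their conjugates through the sourced Bianchi identities.
\item The $\Psi_2^{(1)}$ equation fixes $h_{ll}$, with $\Phi_{11}^{(1)}$ and $\Lambda^{(1)}_{\mathrm{NP}}$ as sources.
\end{itemize}
This assignment is why item~2 is needed: the torus projections \eqref{eq:torus_phi20} show $\Phi_{20},\Phi_{21}\neq0$ generically, so $\Phi_{22}$ alone leaves these components unsourced and hence undetermined. It also proves the sentence that the trace branch closes only the trace subsector. Local solvability of this triangular system, given compatible data, is Theorem~\ref{thm:local_reconstruction}. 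Item~3 supplies $\Psi_4^{(1)}$, and item~1 makes the unused Bianchi identities consistent, as in the proof of that theorem.

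The second step counts integration constants. Each first-order radial integration contributes one constant per angular coefficient per region between interfaces. Item~4 supplies one condition per constant: horizon regularity, decay at infinity, and continuity plus jump conditions at each interface. The jump conditions are obtained by integrating the distributional equations, as in Eqs.~\eqref{eq:hmm_decoupled} and \eqref{eq:sourced_dragging_equation}, and item~6 says which weak solution these one-sided fields represent. The homogeneous solutions remaining after these conditions are exactly the residual gauge modes preserving $n^\mu h_{\mu\nu}=0$ and the $\delta M,\delta J$ (and other zero-frequency) kernel modes. Item~5 normalizes these, which makes the count square.

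For necessity, I will use the shell of Sec.~\ref{Vclosed} as a witness for each item.
\begin{itemize}
\item Dropping $\delta J$ or the interior no-spin condition leaves the $r^{-3}$ or constant dragging mode free.
\item Dropping a jump condition leaves the coefficient in Eq.~\eqref{eq:trace_solution_shell} free.
\item Dropping conservation (item~1) makes the unused Bianchi constraints inconsistent with the transported fields.
\item Dropping $\Phi_{11}$ or $\Lambda_{\mathrm{NP}}$ leaves $h_{ll}$ unsourced, as the four nonzero projections in Eq.~\eqref{eq:shell_ricci_scalars} show.
\end{itemize}
The main obstacle is item~7. Matching the counts does not by itself make the gauge-fixed coupled angular–radial operator invertible, since mode coupling can create further kernel elements. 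I would first check, at finite angular truncation, that the operator is Fredholm of index zero between horizon-regular and decaying weighted spaces. Then, modulo the gauge and completion kernel, injectivity gives surjectivity, and this is exactly the conditional hypothesis that item~7 states. The proposition is conditional precisely so that it need not prove this step.
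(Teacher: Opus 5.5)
Your proposal follows essentially the same route as the paper's proof, a counting-and-compatibility argument. The six surviving components are fed by the Bianchi, Ricci, and tetrad identities, stress--energy conservation serves as the integrability condition, boundary and interface data fix the transport constants, completion and residual-gauge data fix the homogeneous kernel, and item~7 is assumed to supply invertibility on the complement of that kernel. The difference is that you make the converse concrete with shell and torus witnesses for items 1, 2, 4, and 5 and recast item~7 as a Fredholm-index remark, whereas the paper only asserts that omitting a source projection, integration condition, or kernel datum leaves some equation or amplitude unspecified.
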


\begin{proof}
The gauge condition leaves the six tetrad components listed in
Sec.~\ref{IV}, with reality reducing them to six real functions.  The
linearized Bianchi identities propagate the Weyl data to the connection
variables, and the Ricci and tetrad identities propagate the connection
variables to these metric components.  Stress-energy conservation is
the integrability condition implied by the linearized contracted
Bianchi identity.  The traceful gauge gives
$h^{(1)}=2h^{(1)}_{m\bar m}$, so the
$\Phi^{(1)}_{22}\!\to\mu^{(1)}\!\to h^{(1)}_{m\bar m}$ chain fixes the
particular trace response but leaves the other Ricci-driven and
homogeneous sectors untouched.  Boundary and interface data fix the
transport integration constants; completion data fix the physical
kernel; and residual-gauge conditions fix its pure-gauge part.  These
items therefore supply one condition for every differential or kernel
freedom.  Conversely, omitting a source projection, an integration
condition, or a kernel datum leaves at least one equation or homogeneous
amplitude unspecified.
The estimate in item 7 controls the inverse on the complement of the
kernel; surjectivity supplies existence and injectivity supplies
uniqueness once the gauge and completion modes have been fixed.
\end{proof}

Proposition~\ref{prop:kerr_closure} is a conditional closure criterion:
global invertibility follows when the estimate and kernel
classification in item 7 hold.  The shell satisfies the corresponding
reduced conditions and is therefore a solved benchmark.  The torus
supplies item 1 and the Ricci data in item 2, while items 3--7 specify
the additional ingredients for its complete Kerr metric
reconstruction.

We now extend the preceding formulation to a generic first-order
stress--energy tensor,
\begin{equation}
T_{\mu\nu}^{(1)}
=
T_{\mu\nu}^{(1)}(t,r,\theta,\phi),
\label{GeneralSource}
\end{equation}
without imposing stationarity or axisymmetry. The source is assumed to
satisfy the background conservation law
\begin{equation}
\nabla_{(0)}^{\mu}T_{\mu\nu}^{(1)}=0.
\label{eq:general_source_conservation}
\end{equation}
Because the Kerr background is Ricci flat, the first-order Ricci tensor
is determined by the source through the trace-reversed linearized
Einstein equation introduced above. Projecting this tensor onto the
Kinnersley null tetrad gives the Newman--Penrose Ricci scalars that
source the reconstruction hierarchy. Among the projections entering
the outgoing transport chains considered here are
\begin{equation}
\Phi^{(1)}_{20}
=
\frac{1}{2}R^{(1)}_{\mu\nu}m^{\mu}m^{\nu},
\qquad
\Phi^{(1)}_{21}
=
\frac{1}{2}R^{(1)}_{\mu\nu}m^{\mu}n^{\nu},
\qquad
\Phi^{(1)}_{22}
=
\frac{1}{2}R^{(1)}_{\mu\nu}n^{\mu}n^{\nu}.
\label{RicciScalarsGeneral}
\end{equation}

Among these projections, $\Phi^{(1)}_{22}$ plays a distinguished role
because it appears directly in the transport equation for the
perturbed convergence coefficient $\mu^{(1)}$ and therefore provides
the entry point to the trace reconstruction. The projections
$\Phi^{(1)}_{21}$ and $\Phi^{(1)}_{20}$ enter other linearized Ricci
and Bianchi identities associated with the reconstruction of the
remaining perturbed spin coefficients and metric components.

The three quantities displayed in
Eq.~\eqref{RicciScalarsGeneral} form the subset
\begin{equation}
\left\{
\Phi^{(1)}_{20},
\,
\Phi^{(1)}_{21},
\,
\Phi^{(1)}_{22}
\right\}
\label{CompleteRicciSet}
\end{equation}
that enters the particular outgoing transport chains discussed here.
They do not constitute the complete Newman--Penrose decomposition of
a generic perturbed Ricci tensor. Other Ricci projections, their
complex conjugates, and $\Lambda_{\mathrm{NP}}^{(1)}$ enter the
remaining linearized Newman--Penrose equations. 

The trace sector follows Ref.~\cite{Li:2026rkf},
\begin{equation}
\Phi_{22}^{(1)}
\longrightarrow
\mu^{(1)}
\longrightarrow
h_{m\bar m}^{(1)}
\longrightarrow
h^{(1)}.
\end{equation}

The three Ricci projections enter different components of the
linearized Newman--Penrose system and therefore contribute to distinct
sectors of the sourced reconstruction problem. Among them,
$\Phi_{22}^{(1)}$ enters the Ricci identity governing the perturbed
convergence coefficient $\mu^{(1)}$ and thereby defines the
Ricci-driven trace reconstruction hierarchy of
Ref.~\cite{Li:2026rkf}. By contrast, $\Phi_{21}^{(1)}$ and
$\Phi_{20}^{(1)}$ enter the
remaining coupled Ricci and Bianchi identities governing additional
perturbed spin coefficients and metric components. Their explicit reconstruction provides a natural direction for a
complete treatment of the coupled Newman--Penrose system.

This hierarchy constitutes the essential extension of the conventional trace-free radiation-gauge reconstruction and remains valid independently of whether the perturbation is stationary or fully dynamical. The following subsection discusses the general representation of these equations in Kerr spacetime and the structure of the resulting coupled system.

\subsection*{Fourier Representation and Coupled Harmonic Structure}

The Kerr spacetime possesses two commuting Killing vector fields,
\(\partial_t\) and \(\partial_\phi\), associated with stationarity and
axisymmetry. Consequently, the dependence of any first-order perturbation
on the coordinates \(t\) and \(\phi\) may be represented by a Fourier
expansion,

\begin{equation}
X(t,r,\theta,\phi)
=
\sum_{m=-\infty}^{\infty}
\int_{-\infty}^{\infty}
d\omega\,
X_{m\omega}(r,\theta)
e^{im\phi-i\omega t},
\label{FourierExpansion}
\end{equation}
where \(X\) denotes any perturbed Newman--Penrose quantity, including
Weyl scalars, Ricci scalars, spin coefficients, or tetrad components of
the metric perturbation.

Unlike the stationary axisymmetric case discussed in the previous
section, the angular dependence cannot, in general, be completely
separated from the radial dependence. This is because the Newman--Penrose
transport operators contain coefficients depending explicitly on the
function

\begin{equation}
\Sigma=r^{2}+a^{2}\cos^{2}\theta,
\label{SigmaFunction}
\end{equation}
which introduces explicit angular dependence into the transport
equations.  The projected trace equations described above illustrate
this mode coupling explicitly for the stationary axisymmetric sector.
The generic time-dependent case inherits the same coupled structure, although
with additional Fourier labels $(m,\omega)$. Consequently, although the Fourier labels
\((m,\omega)\) remain well defined owing to the symmetries of the Kerr
background, the angular harmonics are generally coupled through the
coefficients of the Newman--Penrose equations.

For practical calculations, one may project the perturbation variables
onto a complete basis of spin-weighted spheroidal harmonics,

\begin{equation}
X_{m\omega}(r,\theta)
=
\sum_{\ell}
X_{\ell m\omega}(r)\,
{}_sS_{\ell m\omega}(\theta),
\label{AngularExpansion}
\end{equation}
where \({}_sS_{\ell m\omega}(\theta)\) denotes a spin-weighted
spheroidal harmonic basis function of spin weight \(s\). Because the Newman--Penrose transport operators retain explicit
$\theta$-dependence through the Kerr background function
$\Sigma=r^2+a^2\cos^2\theta$, the harmonic expansion (\ref{AngularExpansion}) does not, in
general, diagonalize the reconstruction equations with respect to the
multipole index $\ell$. Instead, for fixed values of \((m,\omega)\), the
projection leads to a coupled system for the harmonic amplitudes
\(X_{\ell m\omega}(r)\).

Accordingly, the traceful reconstruction hierarchy derived in the
preceding sections should be understood as acting on this coupled system
of harmonic amplitudes rather than on completely independent
\(\ell\)-modes. The Ricci-driven trace hierarchy itself, summarized in
Eq.~(\ref{123}),
remains unchanged; only the structure of the coupled transport equations
used to determine the corresponding amplitudes is modified.

\subsection*{General Traceful Reconstruction Procedure}

The preceding discussion establishes the general form of the sourced
Newman--Penrose equations for arbitrary perturbations of the Kerr
geometry. Although the angular harmonics are generally coupled through
the transport operators, the logical structure of the reconstruction
remains hierarchical.

The reconstruction begins by specifying the matter source
\(T_{\mu\nu}\), from which the linearized Einstein equations determine
the Ricci tensor perturbation \(R^{(1)}_{\mu\nu}\). The corresponding
Newman--Penrose Ricci scalars given by Eq.~\eqref{RicciScalarsGeneral} provide the source terms entering the linearized Newman--Penrose
equations.

Among these quantities, the scalar \(\Phi^{(1)}_{22}\) retains its
distinguished role because it directly sources the transport equation
for the convergence coefficient \(\mu^{(1)}\). Once
\(\mu^{(1)}\) has been determined from the corresponding
Newman--Penrose transport equation, the metric component
\(h^{(1)}_{m\bar m}\) follows from the associated connection equation.
The metric trace is then obtained from

\begin{equation}
h^{(1)}
=
2h^{(1)}_{m\bar m},\label{111}
\end{equation}
which follows directly from the traceful radiation gauge introduced in
Sec.~\ref{III}.

The resulting reconstruction procedure may therefore be summarized
schematically as
\begin{equation}
T_{\mu\nu}
\longrightarrow
R_{\mu\nu}^{(1)}
\longrightarrow
\left\{
\begin{array}{l}
\Phi_{22}^{(1)}
\longrightarrow
\mu^{(1)}
\longrightarrow
h_{m\bar m}^{(1)}
\longrightarrow
h^{(1)},
\\[1mm]
\Phi_{21}^{(1)},\,
\Phi_{20}^{(1)}
\longrightarrow
\text{remaining coupled Newman--Penrose system}.
\end{array}
\right.
\label{GeneralReconstruction}
\end{equation}
The initial part of the hierarchy describes the determination of the
Ricci-sector source from the matter distribution. The
$\Phi_{22}^{(1)}$ branch is the Ricci-driven trace reconstruction
hierarchy of Ref.~\cite{Li:2026rkf}, whereas the
$\Phi_{21}^{(1)}$- and $\Phi_{20}^{(1)}$-driven branches contribute to
the remaining coupled Newman--Penrose transport equations.  Their
reconstruction is obtained by solving the full coupled system with the
boundary and completion data specified above.

\subsection*{Ricci-Sector Contributions to the Reconstruction}

The sourced reconstruction hierarchy is not determined by a single Ricci
projection. Rather, the linearized Newman--Penrose equations contain
several Ricci-sector source terms. In the radiation-gauge hierarchy
considered here, the relevant projections are
\(\Phi^{(1)}_{20}\), \(\Phi^{(1)}_{21}\), and
\(\Phi^{(1)}_{22}\). These scalars enter different parts of the
linearized Newman--Penrose system and therefore play distinct
geometrical roles.

The Ricci projections $\Phi^{(1)}_{20}$, $\Phi^{(1)}_{21}$, and
$\Phi^{(1)}_{22}$ enter different linearized Newman--Penrose equations
and therefore contribute to distinct sectors of the sourced
reconstruction problem.  We focus here on the $\Phi^{(1)}_{22}$
channel, which enters the Ricci
identity governing the perturbed convergence coefficient
$\mu^{(1)}$. Solving the corresponding transport equation determines
$\mu^{(1)}$, after which the associated connection equation yields
$h^{(1)}_{m\bar m}$, and the metric trace follows directly from
Eq.~\eqref{111}. This defines the Ricci-driven trace reconstruction
hierarchy given in Eq.~\eqref{GeneralReconstruction}. By contrast, the
$\Phi^{(1)}_{21}$ and $\Phi^{(1)}_{20}$ channels contribute to the
remaining coupled Newman--Penrose transport equations governing
additional perturbed spin coefficients and metric components.

\subsection*{Scope and Physical Interpretation}

This section applies the general traceful radiation-gauge structure of
Ref.~\cite{Li:2026rkf} to arbitrary sourced perturbations of the Kerr
spacetime. Unlike the stationary,
axisymmetric case considered in the previous sections, generic dynamical
sources lead to coupled angular equations after projection onto a
harmonic basis. Consequently, the reconstruction problem is formulated
as a system of coupled transport equations rather than as a set of
independent mode equations.

The Ricci-driven hierarchy itself is due to Li and
Yunes~\cite{Li:2026rkf}.  Here it is used to formulate the shell
application and to display the structure of the remaining Kerr
problem.  The Ricci projection $\Phi^{(1)}_{22}$ enters the Newman--Penrose
transport equation for the convergence coefficient $\mu^{(1)}$, from
which the metric component $h^{(1)}_{m\bar m}$ and the metric trace
$h^{(1)}$ are subsequently reconstructed within the traceful radiation
gauge. This construction complements the conventional Weyl-curvature
reconstruction, which determines the trace-free part of the metric
perturbation.

The resulting formulation applies to generic sources.  For each
application, the coupled transport equations are completed by the
physical source, boundary conditions, and gauge choice appropriate to
that problem.

No separation into singular and regular fields is performed in the
present work. Accordingly, the present reconstruction should be
interpreted as a formulation of the sourced Newman--Penrose transport
equations rather than as a complete regularized metric reconstruction
for singular sources. Consequently, the formalism is not intended for
direct application to point-particle self-force calculations. Instead,
the present analysis establishes the trace-sector reconstruction
equations for smooth or distributional matter sources within the
traceful radiation-gauge framework. The incorporation of the
singular--regular field decomposition and the associated
regularization procedures required for singular sources is left for
future work.

\section{Consistency, closure, and scope of the reconstruction}
\label{VII}

The purpose of this section is to delimit precisely what is established
by the reconstruction framework and to verify its consistency in the
relevant limiting cases. A local transport hierarchy, a globally closed
boundary-value problem, and an explicit solution of that problem are
logically distinct statements. The first is established here under the
hypotheses of Theorem~\ref{thm:local_reconstruction}, whereas the global
closure criterion is formulated in
Proposition~\ref{prop:kerr_closure}. We do not claim an unconditional
global existence-and-uniqueness theorem for a generic sourced Kerr
perturbation.

At the field-equation level, the reconstruction may be summarized by
the sequence
\begin{equation}
\left(
T^{(1)}_{\mu\nu},\Psi_4^{(1)};
\mathcal B,\mathcal I,\mathcal G,\mathcal C
\right)
\longmapsto
\left(R^{(1)}_{\mu\nu},\Phi^{(1)}_{ij}\right)
\longmapsto
h^{(1)}_{\mu\nu},
\label{eq:formulation_data_map}
\end{equation}
where $\mathcal B$ denotes exterior boundary data, $\mathcal I$
interface or initial data, $\mathcal G$ residual-gauge data, and
$\mathcal C$ the nonradiative completion. Stress--energy conservation
and the linearized Bianchi identities supply the compatibility
conditions for the first map, while the Ricci, Bianchi, and tetrad
identities determine the differential hierarchy entering the second.
Boundary and interface conditions fix the associated transport
integration functions, whereas $\mathcal G$ and $\mathcal C$ determine
the homogeneous kernel that is invisible to the radiative Weyl data.
The local reconstruction theorem therefore establishes a unique map in
a flow neighbourhood once the required compatibility, gauge, boundary,
and completion data have been fixed. It does not, by itself, establish
global invertibility throughout the Kerr exterior.

This distinction also clarifies the closure of the different sectors.
The Weyl sector determines the radiative curvature but does not by
itself determine either the Ricci response or the stationary
completion. In particular, the $\Phi^{(1)}_{22}$ branch determines the
corresponding trace response but not the metric components sourced by
the remaining Ricci projections. Those components follow from the
remaining Newman--Penrose equations and their compatibility
conditions. Furthermore, $\delta M$, $\delta J$, possible additional
zero-frequency modes, and the residual radiation-gauge kernel must be
fixed independently by conserved charges, boundary data, or
normalization conditions. Thus a chain such as
\begin{equation}
\Phi^{(1)}_{22}
\longrightarrow
\mu^{(1)}
\longrightarrow
h^{(1)}_{m\bar m}
\longrightarrow
h^{(1)}
\end{equation}
constitutes only one sector of the sourced reconstruction rather than
the complete problem. A reconstruction is closed only when the source
is conserved, all required Ricci projections and compatible radiative
data are supplied, the transport integration functions are fixed by
boundary, initial, or junction conditions, and the residual gauge and
completion kernel are specified.

Several limiting cases provide direct consistency checks of this
structure. First, in the nonrotating limit
\begin{equation}
a\rightarrow0,
\end{equation}
the Kerr geometry reduces smoothly to Schwarzschild, the spin-weighted
spheroidal harmonics become spin-weighted spherical harmonics, and the
Kinnersley tetrad reduces to its Schwarzschild counterpart
\cite{Chandrasekhar1983,Frolov:1998wf}. The only nonvanishing
background Weyl scalar is then
\begin{equation}
\Psi_2^{(0)}=-\frac{M}{r^3}.
\end{equation}
Accordingly, the Kerr transport hierarchy reduces to the corresponding
traceful Schwarzschild reconstruction, with all explicitly rotational
contributions disappearing. This provides a direct consistency check
of the rotating extension.

A second check follows from the vacuum limit,
\begin{equation}
T_{\mu\nu}\rightarrow0.
\end{equation}
The linearized Einstein equations then imply
\begin{equation}
R_{\mu\nu}^{(1)}=0,
\qquad
\Phi_{ij}^{(1)}=0,
\end{equation}
and, in particular,
\begin{equation}
\Phi_{22}^{(1)}
=
\Phi_{21}^{(1)}
=
\Phi_{20}^{(1)}
=
0.
\end{equation}
The Ricci-driven trace equation consequently becomes homogeneous,
\begin{equation}
\Delta_{[2,0,1,1]}\mu^{(1)}=0,
\end{equation}
with the subsequent equations governing $h_{m\bar m}^{(1)}$ becoming
homogeneous as well. Under the standard radiation-gauge boundary
conditions, this sector decouples and the remaining reconstruction is
governed by the curvature hierarchy,
\begin{equation}
\Psi_4^{(1)}
\longrightarrow
\lambda^{(1)}
\longrightarrow
h_{\bar m\bar m}^{(1)},
\qquad
\Psi_3^{(1)}
\longrightarrow
\pi^{(1)}
\longrightarrow
h_{l\bar m}^{(1)},
\qquad
\Psi_2^{(1)}
\longrightarrow
h_{ll}^{(1)}.
\end{equation}
The formulation therefore reduces consistently to the familiar vacuum
radiation-gauge framework of Chrzanowski, Cohen, Kegeles, and Wald
\cite{Cohen:1974cm,Chrzanowski:1975wv,Kegeles:1979an,Wald:1978vm}.
The resulting Newman--Penrose transport hierarchy is related to, but
should not be identified equation by equation with, the conventional
Hertz-potential CCK reconstruction.

The trace-free sector provides a further, more restrictive consistency
test. It cannot be imposed for an arbitrary matter source. In the
traceful outgoing radiation gauge considered here, the condition
\begin{equation}
h^{(1)}=0, \qquad \mbox{requires}\qquad 
T_{nn}^{(1)}=0.
\label{TraceFreeSourceCondition}
\end{equation}
Since the background vector $n^\mu$ is null, the linearized Einstein
equations give
\begin{equation}
R_{nn}^{(1)}
=
8\pi T_{nn}^{(1)},
\qquad
\Phi_{22}^{(1)}
=
\frac{1}{2}R_{nn}^{(1)}
=
4\pi T_{nn}^{(1)},\quad \mbox{and therefore} \quad 
T_{nn}^{(1)}=0
\qquad\Longleftrightarrow\qquad
\Phi_{22}^{(1)}=0.
\label{TraceFreeCompatibility}
\end{equation}
Under this compatibility condition the trace transport equation becomes
\begin{equation}
\Delta_{[2,0,1,1]}\mu^{(1)}=0.
\label{HomogeneousTraceEquation}
\end{equation}
If the boundary conditions and residual gauge prescription also remove
the associated homogeneous mode, one may consistently impose
\begin{equation}
h^{(1)}
=
2h_{m\bar m}^{(1)}
=
0.
\label{TraceFreeMetricCondition}
\end{equation}
The Ricci-driven trace chain then becomes trivial. Hence the
conventional trace-free outgoing-radiation-gauge sector is recovered
only for sources satisfying
$T_{nn}^{(1)}=\Phi_{22}^{(1)}=0$, together with boundary and gauge
conditions eliminating the homogeneous trace solution. It is not a
consistent reduction for a generic traceful source with
$\Phi_{22}^{(1)}\neq0$.

The matter applications illustrate the same closure structure from two
different directions. For the shell, source conservation and the
distributional junction conditions are imposed explicitly. Horizon
regularity and asymptotic flatness determine the exterior homogeneous
data, Israel matching fixes the jumps, and the ADM charges determine
the mass and angular-momentum completion. The barotropic surface
equation of state, fixed rest mass, and isotropy conditions then close
the leading even-parity rotational sector. Within the perturbative
orders retained in Sec.~\ref{Vclosed}, this provides an explicit
realization of the reconstruction procedure.

The torus has a complementary role. Its equilibrium construction
provides a smooth conserved stress tensor, while
Eqs.~\eqref{eq:torus_phi20}--\eqref{eq:torus_lambda_np} supply explicit
Ricci-source projections. The simultaneous presence of
$\Phi^{(1)}_{20}$, $\Phi^{(1)}_{21}$, and $\Phi^{(1)}_{22}$ demonstrates
directly why the trace branch alone cannot reconstruct the complete
nonspherical metric. The torus therefore verifies the source side of
the formulation, while the remaining coupled Newman--Penrose
equations, their horizon and asymptotic conditions, and the completion
kernel must still be solved before a complete global reconstruction is
obtained.

The remaining issue is consequently not an ambiguity in the variables
or source channels, but the global analytic properties of the coupled
gauge-fixed operator. A complete global theorem would require function
spaces compatible with horizon regularity and asymptotic decay, an
a priori estimate on the complement of the gauge and completion
kernel, surjectivity for compatible sources, and a characterization of
the zero-frequency kernel. Distributional sources additionally require
an appropriate weak-solution prescription, while point-particle
self-force applications require a separate singular--regular
decomposition.

The scope of the present formulation can therefore be stated sharply.
It establishes the local traceful reconstruction hierarchy, identifies
the complete data required for closure, verifies the appropriate
Schwarzschild, vacuum, and trace-free limits, and demonstrates how the
closure conditions operate in representative matter configurations.
At the same time, the general global Kerr problem remains conditional
on the analytic assumptions entering
Proposition~\ref{prop:kerr_closure}. Keeping these two levels distinct
prevents the explicit solution of restricted matter configurations
from being interpreted as a general existence-and-uniqueness theorem
for sourced metric reconstruction on the full Kerr exterior.

\section{Discussion and Conclusion}
\label{VIII}

The main purpose of this work was to formulate the closure problem for
first-order, traceful radiation-gauge reconstruction with matter. The metric
perturbation separates into radiative Weyl-curvature, Ricci-driven trace, and
nonradiative mass--angular-momentum completion sectors. We establish local
existence and uniqueness of the triangular transport hierarchy on a regular
Kerr domain once initial data, residual gauge freedom, and completion
parameters are fixed. Proposition~\ref{prop:kerr_closure} specifies the
conservation, boundary, matching, charge, function-space, and kernel
conditions required for a closed global formulation. This is a conditional
criterion; we do not claim generic global invertibility of the Kerr operator.

A conserved thin shell around Schwarzschild provides a benchmark
application. The static sector yields
$\delta M_{\mathrm{ADM}}=\sqrt{f(R)}M_{\mathrm{s}}$, with the Israel relation
adding the binding term $-M_{\mathrm{s}}^2/(2R)$. The axial sector gives
$\delta J=I_{\mathrm{s}}\Omega_{\mathrm{s}}$, where
$I_{\mathrm{s}}=\delta M R(2R-3M)/[3f^2(R)]$, together with rigid interior
frame dragging. Imposing a barotropic surface equation of state, fixed rest
mass, and isotropy closes the leading even-parity rotational deformation.
For the configurations studied, rotational mass corrections are positive,
quadrupoles are negative, and numerical closure residuals remain below
$6.0\times10^{-11}$.

A stationary constant-angular-momentum perfect-fluid torus on Kerr supplies a
smooth conserved nonspherical source. Its Newman--Penrose projections show
that $\Phi_{20}^{(1)}$, $\Phi_{21}^{(1)}$, and $\Phi_{22}^{(1)}$ are generally
excited simultaneously, providing the coupled source data required for the
general Kerr hierarchy.

The present construction has clear limits. The shell is solved explicitly only
through the $(1,0)$, $(1,1)$, and $(1,2)$ sectors; higher mixed orders require
the associated second-order fields. For the torus, the Ricci sources are
explicit and the coupled harmonic boundary-value problem is formulated, but
full metric reconstruction requires solving the remaining Newman--Penrose
equations with compatible horizon and asymptotic data. A stronger global
result would also require operator estimates and a classification of the
zero-frequency gauge and completion kernel.

In summary, we formulate sourced, traceful radiation-gauge reconstruction by
separating its locally well-posed transport system, independent data, residual
kernel freedoms, and conditional global closure requirements. The shell gives
a closed benchmark in which the trace, Coulomb completion, angular momentum,
and rotational observables are determined consistently. The torus provides a
smooth nonspherical source and displays the coupled Ricci structure of the
general Kerr problem. These results establish a systematic framework while
identifying the remaining unproved aspects of global reconstruction.



%

\end{document}